\newtheorem{assumption}{Assumption}
\begin{document}
\mainmatter              
\title{Multi-Objective Learning Model Predictive Control}
\titlerunning{Multi-Objective Learning Model Predictive Control}  
%
\author{Siddharth H. Nair\thanks{Contributed equally.}\inst{1}, Charlott Vallon$^\star$\inst{1}, and Francesco Borrelli\inst{1}}
\authorrunning{Nair et al.} 
%
%
\institute{University of California, Berkeley, Berkeley CA 94720, USA,\\
\email{charlottvallon@berkeley.edu}}
\maketitle              

\begin{abstract}
Multi-Objective Learning Model Predictive Control is  a novel data-driven control scheme which improves a linear system's closed-loop performance with respect to several convex control objectives over iterations of a repeated task. 
At each task iteration, collected system data is used to construct terminal  components of a Model Predictive Controller. 
 The formulation presented in this paper ensures that closed-loop control performance improves between successive iterations with respect to each objective.
We provide proofs of recursive feasibility and performance improvement, and show that the converged policy is Pareto optimal.
 Simulation results demonstrate the applicability of the proposed  approach. 
\keywords{Data-driven control, Model predictive control, Multi-objective optimization}
\end{abstract}

%
\section{Introduction}
Iterative learning control (ILC) considers the problem of solving the same task multiple times (consider robotic manipulators performing repetitive motions \cite{meng2017iterative}, drone racing \cite{lv2023autonomous}, or batch processes \cite{geng2023data}). Each task execution is referred to as an iteration.
In \textit{data-driven} ILC, system data collected during an iteration is used to improve the control performance at subsequent iterations. 
This approach can be particularly helpful if designing an optimal controller a priori is difficult, for example due to modeling challenges or task complexity. 

The Learning Model Predictive Control (LMPC) method in \cite{rosolia2019learning} proposes an MPC controller formulation that iteratively improves the system's closed-loop performance on the task with respect to a chosen control objective while guaranteeing constraint satisfaction during the learning process. 
Here we extend the LMPC framework to the case where performance is measured with respect to multiple distinct control objectives. 
Most real-world control applications require balancing multiple objectives, such as reducing actuation costs and completing a necessary task, or considering both individual and shared objectives in multi-agent systems \cite{CAMISA20202684}. Designing control formulations that are both implementable and result in the desirable closed-loop behavior can be difficult, especially if there are competing objectives, e.g. minimize both time to complete task and energy required to complete task \cite{taylor2010},\cite{9654900}, \cite{stieler2022}.
In a number of examples, carefully tuned MPC cost function formulations have been proposed that balance multiple application-specific objectives, including for power converters \cite{6565396}, a distillation column \cite{WOJSZNIS2007351}, vehicle adaptive cruise control \cite{zhao2017}, and efficient microgrid management \cite{VASQUEZ2023120998}. 
These works do not include proofs of stability or feasibility; rather, the control parameters are formulated as a function of application-specific values and are validated in simulations or experiments. 

In multi-objective MPC, input trajectories are evaluated with respect to multiple objective functions. In general, a single trajectory which minimizes all costs at the same time does not exist.  
Instead, there will be a Pareto frontier of solutions which do not dominate each other \cite{boyd2004convex}. 
A common approach for application-agnostic multi-objective MPC frameworks is to execute the following steps at each time step: \textit{i)} calculate the Pareto frontier based on the current system and environment state, \textit{ii)} select a specific point on the Pareto frontier, and \textit{iii)} apply the corresponding input. 
Because calculating the Pareto frontier of a multi-objective optimization problem can be difficult and computationally intensive, a number of methods have been proposed to increase efficiency \cite{7074765}, \cite{9655125}, \cite{PEITZ20178674}, including only calculating the relevant subset of the Pareto frontier at each time step \cite{stieler2019}.

In contrast with the reviewed works, we propose using an iterative, data-driven approach for reducing multiple objectives over each iteration of a repeated task. 
In our framework, the MPC controller does not have full knowledge of control invariant sets typically used for guaranteeing stability; instead, this set is constructed iteratively beginning from a single known feasible trajectory.
Critically, no individual closed-loop control cost may be allowed to increase between subsequent iterations. 
Such behavior is desirable in a number of situations. 
Consider, for example, the cases presented in \cite{vallon2024learning}, \cite{vallon2024learning2}, where a high-level planner assigns iterative low-level tasks to an agent.
High-level task assignment occurs based on estimates of the agent's capabilities; for safe planning, the low-level controller must guarantee that as the agent performs a specific task multiple times, incurred control costs do not increase. 
Compared to the capacity-constrained LMPC implementation in \cite{vallon2024learning2}, which minimizes consumption of a single capacity without resulting in increased consumption of other capacities, here we actively minimize multiple objectives. 
In this paper, we:
\begin{enumerate}
    \item propose an application-agnostic Multi-Objective LMPC (MO-LMPC) formulation for iterative closed-loop performance improvement with respect to multiple objectives,
    \item provide proofs of stability and iterative performance improvement with respect to each individual objective, 
    \item prove that if the MO-LMPC converges to a steady-state trajectory, the converged policy is Pareto optimal, and
    \item demonstrate the effectiveness of the formulation in a simulation example.
\end{enumerate}

\textit{ \textbf{Running Example}: Consider a fleet of autonomous electric delivery vehicles. Each day, a high-level scheduler assigns a number of deliveries to each vehicle, which must occur within specific time windows as expected by clients. Task assignment is done on the basis of time and energy cost estimates for routes between delivery locations. The vehicles begin the day fully charged at a centralized depot, and must return there after completing their deliveries without running out of charge.}

\textit{We will refer to this running example throughout the remainder of this paper, to help explain concepts and motivate specific formulations. In this paper we focus on the design of the low-level autonomous vehicle controllers. We refer to \cite{vallon2024learning2} for an example formulation of the high-level scheduler.}

\textit{Notation:}
Unless otherwise indicated, vectors are column vectors constructed using parentheses $(~)$. 
$[A, B]$ denotes the set of integers $\{A, A+1, \dots, B-1, B \}$.



\section{Problem Formulation}\label{sec:PD}

We consider the linear discrete-time system
\begin{align}\label{eq:sysdyn_student}
    x_{t+1}=Ax_t+Bu_t,
\end{align}
 where $x_t\in \mathbb{R}^{n_x}$ and $u_t\in\mathbb{R}^{n_u}$  are the system state and input respectively  at time step $t$. 
 The states and inputs are subject to polytopic constraints given by 
 \begin{subequations}
 \begin{align}
     &\mathcal{X}=\{x\in\mathbb{R}^{n_x}~|~ G_xx\leq g_x\}, ~~\mathcal{U}=\{u\in\mathbb{R}^{n_u}~|~ G_uu\leq g_u\}.
 \end{align}      
 \end{subequations}

We seek feedback policies $\boldsymbol{\pi}=\{\pi_t(\cdot)\}_{t\geq0}$ to be applied to \eqref{eq:sysdyn_student} that optimize the performance of the closed-loop system with respect to $M$ different control objectives $V_i$, where:
\begin{align}\label{vdef}
    V_i(x_0, \boldsymbol{\pi})&=\sum_{t=0}^\infty h_i(x_t,\pi_t(x_t)),~~~ i\in[1,M].
\end{align}    
The $i$th control objective $V_i(x_0, \boldsymbol{\pi})$ is characterized by the stage cost $h_i(\cdot, \cdot)$ incurred over the state and input trajectory that results from applying policy $\boldsymbol{\pi}$ to the system \eqref{eq:sysdyn_student} beginning at state $x_0$ at time $t=0$. We consider stage costs $h_i(\cdot, \cdot)$ which are continuous, jointly convex, and satisfy
\begin{align}\label{eq:stagecost}
    &h_i(x_t, u_t) \succeq 0 ~~~\forall x_t \in \mathbb{R}^{n_x} ~\backslash~ x_F, ~u_t \in \mathbb{R}^{n_u} ~\backslash ~u_F,  ~~  h_i(x_F, u_F) = 0~~ \forall i \in [1,  M],\nonumber\\
    & \exists ~i \in [1,M]~ : ~ h_i(x_t, u_t) \succ 0 ~~~\forall x_t \in \mathbb{R}^{n_x} \backslash x_F, ~u_t \in \mathbb{R}^{n_u} \backslash u_F
\end{align}
for some goal state $x_F \in \mathcal{X}$ which is an equilibrium state under input $u_F \in \mathcal{U}$ so that \eqref{eq:sysdyn_student}:
\begin{align}\label{eq:xf}
    x_F = Ax_F + Bu_F.
\end{align}

\textit{\textbf{Running Example:} 
Each route between two specific delivery locations will be associated with a particular controller, to be applied to a vehicle with state $x$. In order to maximize the number of deliveries that can be completed by the fleet, each controller aims to minimize $M=2$ different objectives along the particular route: $V_1$ measures the time required to complete a particular delivery, and $V_2$ measures the battery charge required to complete a particular delivery.} 

\textit{Note that while true vehicle models are not linear, here we consider that the autonomous vehicle controller uses a linear approximation for path planning, to be tracked by a higher-fidelity reference tracking controller.}



For any given initial state $x_0 = x$, the feedback policy synthesis problem can be posed as the infinite-horizon, multi-objective optimal control problem 
\begin{align}\label{eq:inft_mo_opt}
V^{\star}(x) = \min_{\substack{\boldsymbol{\pi}}}&~~ (V_1(x_0, \boldsymbol{\pi}),~...~, V_M(x_0, \boldsymbol{\pi}))\\
    \text{s.t}&~~ x_{t+1}=Ax_t+B\pi_t(x_t)\nonumber\\
    &~~x_t\in\mathcal{X}, ~~\pi_t(x_t) \in \mathcal{U} ~~\forall t\geq 0\nonumber\\
    &~~x_0=x \nonumber \\
    = (V_1(&x, \boldsymbol{\pi}^{\star}), \dots,V_M(x, \boldsymbol{\pi}^{\star})) \nonumber
\end{align}
where the vector $V^{\star}(x) \in \mathbb{R}^{M}$ contains the $M$ objective values \eqref{vdef} associated with the optimal policy ${\boldsymbol{\pi}}^{\star}=\{\pi^{\star}_t(\cdot)\}_{t\geq0}$. The solution to \eqref{eq:inft_mo_opt} is formalized in Sec.~\ref{ssec:moo}.


There are two main challenges to solving the optimal control problem \eqref{eq:inft_mo_opt}.
First, \eqref{eq:inft_mo_opt} is infinite-dimensional, due to the unbounded time horizon and the optimization over functions $\boldsymbol{\pi} = \pi_0(\cdot),\pi_1(\cdot),..$ as decision variables. 
The aim of minimizing multiple objectives, which may be competing, introduces additional complexity in defining a solution to the problem. 
We propose a data-driven iterative approach to solving \eqref{eq:inft_mo_opt}. 
Towards addressing the infinite dimensionality of the optimal control problem, we adapt techniques from Learning MPC (LMPC) to iteratively approximate the optimal feedback policy using collected trajectory data of \eqref{eq:sysdyn_student}. 
To address the multi-objective nature of the optimal control problem, we use ideas prevalent in the multi-objective optimization literature for defining and obtaining a solution. 
We briefly review these concepts.

\subsection{Learning Model Predictive Control (LMPC)}

Consider the infinite-horizon optimal control problem as in \eqref{eq:inft_mo_opt}, but with a single control objective $V_1(x, \boldsymbol{\pi})$, defined as in \eqref{vdef}:
\begin{align}\label{eq:singleobjective}
   V^\star_1(x) = \min_{\substack{\boldsymbol{\pi}}}&~~ V_1(x_0, \boldsymbol{\pi})\\
    \text{s.t}&~~ x_{t+1}=Ax_t+B\pi_t(x_t)\nonumber\\
    &~~x_t\in\mathcal{X},~~\pi_t(x_t) \in \mathcal{U} ~~\forall t\geq 0\nonumber\\
    &~~x_0=x \nonumber
\end{align}
where $V^\star_1(x) = V_1(x,\boldsymbol{\pi}^\star)$ is the optimal control objective achievable from the initial state $x$, corresponding to an optimal sequence of feedback policies $\boldsymbol{\pi}^\star$. 

Learning Model Predictive Control (LMPC) \cite{rosolia2019learning} approximates a solution to \eqref{eq:singleobjective} by considering a system (\ref{eq:sysdyn_student}) performing the same optimal control task multiple times.
Let $x_t^j$, $u_t^j$ denote the state and input of the system respectively at time $t$ of iteration $j$, and with $\boldsymbol{\pi}^j= \{\pi^j_t(\cdot)\}_{t\geq0}$ the sequence of policies implemented during iteration $j$. 
We assume that at each iteration $j$ the system starts
from the same initial state, $x^j_0 = x_S,~ \forall j \geq 0$.
LMPC approximates the optimal policy of a single-objective, infinite-horizon problem like (\ref{eq:singleobjective}) by 
iteratively improving on an initial feasible sequence of policies, $\boldsymbol{\pi}^0 = \{\pi^0_t(\cdot)\}_{t\geq0}$. 

At iteration $j$, we define the convex set $\mathcal{CS}^j$ as
\begin{align}\label{eq:CS_def}
\mathcal{CS}^j=\text{conv}\left(\bigcup\limits_{r=0}^{j-1}\bigcup\limits_{t\geq0}\{x^r_t\}\right),
\end{align}
so that $\mathcal{CS}^j$ is the convex hull of all state trajectories from previous task iterations (see Assumption~\ref{assmp:cs0} on initialization procedures for $\mathcal{CS}^0$).
We define an estimate of the optimal control objective $V^{\star}_1(\cdot)$ on $\mathcal{CS}^j$ as
\begin{subequations}\label{eq:Tcost_def}
    \begin{align}
\hat{V}^{j,\star}_1(x)=\min_{\substack{\lambda^i_t\geq 0, ~\forall t\geq 0\\ \forall r=0,..,j-1}}& \sum_{r=0}^{j-1} \sum_{t \geq 0}\lambda^r_tV_1(x^r_{t},\boldsymbol{\pi}^r)\\\
    \text{s.t. }& \sum_{i=0}^{j-1} \sum_{t \geq 0} \lambda^r_t x^r_t=x,~~ \sum_{i=0}^{j-1} \sum_{t \geq 0} \lambda^r_t = 1.
\end{align}
\end{subequations}
By definition of $\mathcal{CS}^j$, each state $x \in \mathcal{CS}^j$ is a convex combination of states traversed during a previous iteration; for each state $x \in \mathcal{CS}^j$, $\hat{V}^{j,\star}_1(x)$ is the corresponding barycentric interpolation of the control objectives previously incurred from those states. 
Thus $\hat{V}^{j,\star}_1(\cdot)$ \eqref{eq:Tcost_def} is a data-driven, piecewise linear under-approximation of the convex value function $V^{\star}_1(\cdot)$ \eqref{eq:singleobjective}, estimated by interpolating the control objectives realized in previous trajectories. As more data is collected during additional iterations, the approximation error decreases.
As is convention for infeasible optimization problems, $\hat{V}^{j,\star}_1(x) = \infty$ for any $x \notin \mathcal{CS}^j$.

The LMPC algorithm uses \eqref{eq:CS_def} and \eqref{eq:Tcost_def} to approximate \eqref{eq:singleobjective} using MPC. 
At each time step $t$ of task iteration $j$, the LMPC algorithm solves the finite-horizon optimal control problem from the system's current state $x^j_t$:
\begin{equation}\label{eq:OG_LMPC}
	\begin{aligned}
 J^{\mathrm{LMPC},j}_{t}(x^j_t) =  \min\limits_{\mathbf{x}_t,\mathbf{u}_t } \quad & \displaystyle \hat{V}_1^{j,\star}(x_{t+N|t})+\sum\limits_{k=t}^{t+N-1} h_1(x_{k|t},u_{k|t}) \\
		\text{s.t.}\quad  & x_{k+1|t}=Ax_{k|t}+Bu_{k|t} \\
    & x_{k+1|t}\in\mathcal{X},~ u_{k|t} \in \mathcal{U}~~~~ \forall k \in [t, t+N-1]\\ 
    & x_{t|t} = x^j_t \\
    & x_{t+N|t}\in\mathcal{CS}^j,
	\end{aligned}
\end{equation}
where the decision variables $\mathbf{x}_t=[x_{t|t},..,x_{t+N|t}]$, $\mathbf{u}_t=[u_{t|t},..,u_{t+N-1|t}]$ are the state and input predictions along the prediction horizon $N$ made at time $t$. 
The optimal control problem \eqref{eq:OG_LMPC} searches for a feasible state and input trajectory of length $N$ beginning from the current state $x_t^j$ and ending in $\mathcal{CS}^j$. The objective function is the sum of the control stage cost $h_1(\cdot, \cdot)$ accumulated along the $N$-step trajectory plus the data-driven optimal control objective estimate $\hat{V}^{j,\star}_1(\cdot)$ evaluated the last predicted state.  
Let
\begin{subequations}\label{eq:soln}
\begin{align}
    \mathbf{u}^{j,\star}_t &= [u^{j, \star}_{t|t}, u^{j, \star}_{t+1|t},\dots, u^{j, \star}_{t+N-1|t}],~ 
    \mathbf{x}^{j,\star}_{t} = [x^{j, \star}_{t|t}, x^{j, \star}_{t+1|t},\dots, x^{j, \star}_{t+N|t}]
\end{align}    
\end{subequations}
be the optimal solution to \eqref{eq:OG_LMPC} at time $t$ of iteration $j$. 
Then \eqref{eq:soln} defines the time-invariant feedback policy as 
\begin{align}\label{eq:LMPC}
    u^j_t=\pi^{\mathrm{LMPC},j}(x^j_t)=u^{\star,j}_{t|t}.
\end{align}

At each time $t$, the LMPC controller \eqref{eq:OG_LMPC}-\eqref{eq:LMPC} plans a cost-optimal trajectory ending in $\mathcal{CS}^j$, the convex hull of states from which the system has successfully completed the task in a previous iteration. 
We refer to \cite{rosolia2019learning} for extensive analysis and modeling considerations for the single-objective linear LMPC formulation, including a formulation for jointly solving \eqref{eq:Tcost_def}-\eqref{eq:OG_LMPC} and proofs of recursive feasibility and iterative performance improvement, i.e. $V_1(x_0, \boldsymbol{\pi}^{\mathrm{LMPC},j}) \leq  V_1(x_0, \boldsymbol{\pi}^{\mathrm{LMPC},j-1})$.
In \cite{rosolia2022optimality}, authors show that the LMPC scheme converges asymptotically to the optimal cost of the infinite-horizon control problem \eqref{eq:singleobjective}.


\subsection{Multi-Objective Optimization}\label{ssec:moo}
Consider the generic multi-objective optimization problem
\begin{align}\label{eq:gen_mop}
    \min_{z\in\mathcal{Z}}~(f_1(z), ... , f_M(z))
\end{align}
where $z \in \mathbb{R}^{n_z}$ is the decision variable, $\mathcal{Z}$ the constraint set, and $f_0(\cdot), .., f_M(\cdot)$ distinct objective functions \cite[Chapter~4]{boyd2004convex}. 
A popular solution concept for multi-objective optimization problems is that of \textit{Pareto optimality}, which describes solutions for which no individual objective function $f_i(\cdot)$ can be decreased without causing an increase in another objective function.
\begin{definition} Given \eqref{eq:gen_mop}, a vector $z\in\mathcal{Z}$ is said to be Pareto optimal if $\nexists (z' \in \mathcal{Z}, i \in [1,M])$ such that $f_i(z')<f_i(z)$ and $f_j(z')\leq f_j(z)~\forall j \in [1,M] \setminus i$.
\end{definition}
Pareto optimal solutions to \eqref{eq:gen_mop} can be obtained by a scalarization procedure which converts \eqref{eq:gen_mop} into a single weighted-sum objective optimization problem. A common choice is to minimize a convex combination of objectives,
\begin{align}\label{eq:cvx_sclrzn}
    \min_{z\in\mathcal{Z}}~\sum_{i=1}^M \alpha_i f_i(z)
\end{align}
for some $\alpha$ such that $0 < \alpha_i < 1, ~\sum_{i=1}^M \alpha_i = 1$. When the objective functions $f_i(\cdot)$ and constraint set $\mathcal{Z}$ are convex, solving \eqref{eq:cvx_sclrzn} for a particular choice of $\alpha_i > 0$ produces a Pareto optimal solution $\bar{z}$, and conversely, \textit{every} Pareto optimal solution corresponds to a choice of $(\alpha_1,\dots, \alpha_M)$ \cite{miettinen1999nonlinear}.

\section{Multi-Objective LMPC}\label{sec:MOLMPC}

\begin{figure}[t]
    \centering
    \includegraphics[width=\columnwidth]{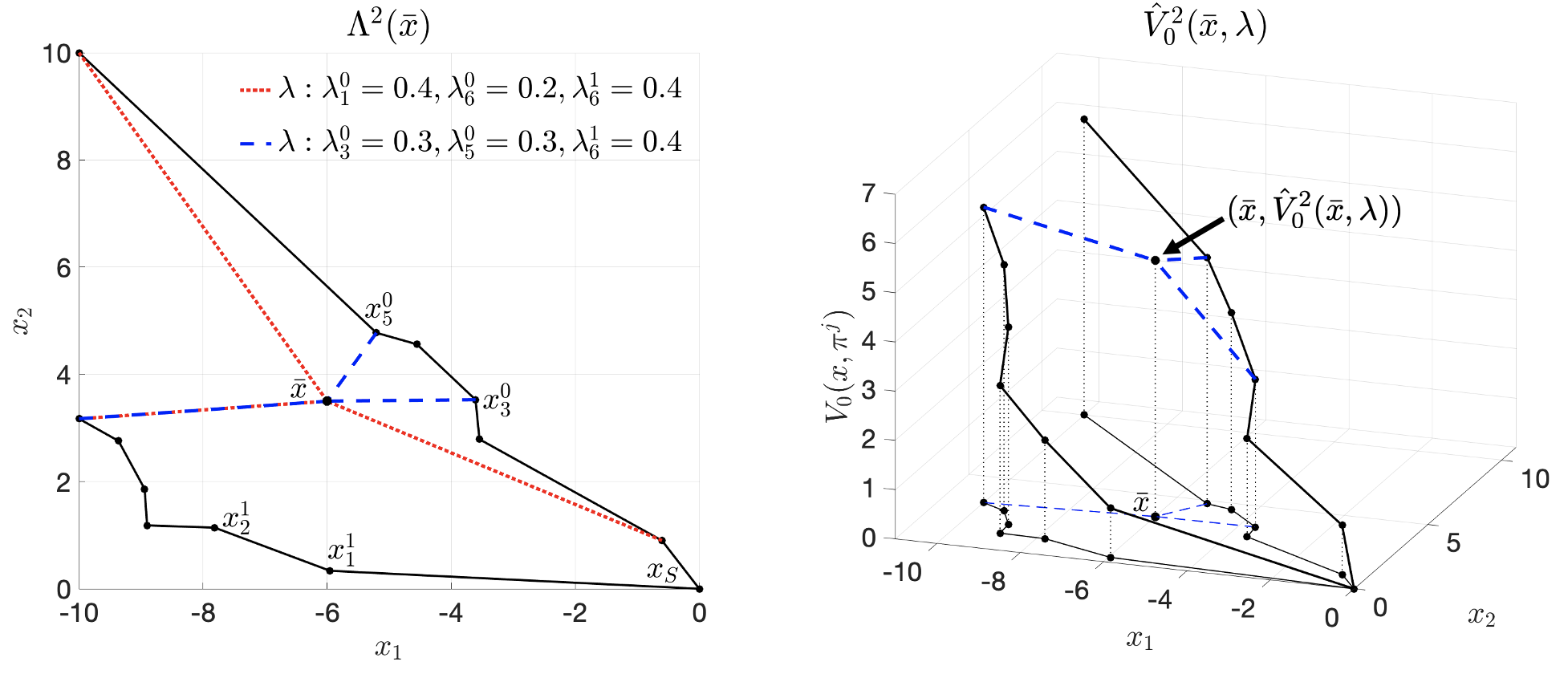} 
    \caption{The left plot depicts two possible values $\boldsymbol{\lambda} \in \Lambda^2(\bar{x})$ for a particular $\bar{x}$. Each $\boldsymbol{\lambda}$ corresponds to a particular convex combination of states from previous trajectories (here, segments of $\mathbf{x}^0$ and $\mathbf{x}^1$ are shown). The right plot depicts how $\hat{V}^{2,\star}_1(\bar{x},\boldsymbol{\lambda})$ is interpolated for a particular choice of $\boldsymbol{\lambda} \in \Lambda^2(\bar{x})$.} 
    \label{fig:lambda} 
\end{figure} 

Our approach adapts the LMPC formulation in \eqref{eq:OG_LMPC} to the multi-objective case, where our goal is to minimize multiple control costs over repeated task iterations. Specifically, we will formulate the Multi-Objective (MO-LMPC) to ensure that no single control objective $V_i$ \eqref{vdef} increases between any iterations $j\geq 0$ and $j+1$. We again assume that at each iteration $j$ the system starts from the same initial state, $x^j_0 = x_S,~ \forall j \geq 0$.

\textit{\textbf{Running Example:} 
Each time a delivery vehicle traverses a particular route between two delivery locations is considered an iteration $j$.
Note that the vehicles begin and end each of these iterations at the same set of geographical location $x_S$ and $x_F$. The high-level scheduler updates its estimated time and energy costs for each route based on recent vehicle performance. In order to ensure feasibility of the assigned schedule (i.e. no vehicle runs out of charge while completing a delivery or misses the assigned delivery window), we require that the low-level controllers never consume more time or energy on a route than during the previous iteration.}

Consider the convex safe set $\mathcal{CS}^j$ as defined at iteration $j$ in \eqref{eq:CS_def}.
By definition, each state $x\in\mathcal{CS}^j$ is associated with a set $\Lambda^j(x)$ of multipliers $\boldsymbol{\lambda}$, where
\begin{align}
   \Lambda^j(x)=\left\{ \boldsymbol{\lambda} = \bigcup_{r=0}^{j-1} \bigcup_{t\geq 0} \lambda^r_t ~~\middle\vert ~~\begin{aligned} &\sum_{i=0}^{j-1} \sum_{t \geq 0} \lambda^r_t x^r_t=x,\\
   &~ \sum_{i=0}^{j-1} \sum_{t \geq 0} \lambda^r_t = 1,~ \lambda^r_t\geq 0 \end{aligned} \right\}.
\end{align}
Each $\boldsymbol{\lambda}\in\Lambda^j(x)$ induces a specific set of $M$ control objective estimates $ \hat{V}^{j}_i$, one for each objective $V_i,~ i \in [1,M]$:
    \begin{align}\label{eq:Tcost_def_multi}
    \hat{V}^{j}_i(x, \boldsymbol{\lambda})= \begin{cases}
        \sum_{r=0}^{j-1} \sum_{t \geq 0}\lambda^r_tV_i(x^r_{t},\boldsymbol{\pi}^r) & \boldsymbol{\lambda} \in \Lambda^j(x)\\
        \infty & \text{else}.
    \end{cases}
\end{align}
The function $\hat{V}_i(x, \boldsymbol{\lambda})$ is a data-driven approximation of the value function $V^{\star}_i(x)$ \eqref{eq:singleobjective}, estimated by interpolating the single-objective value functions \eqref{vdef} associated with previous trajectories; in contrast with \eqref{eq:Tcost_def}, in \eqref{eq:Tcost_def_multi} the interpolation weights are specified by $\boldsymbol{\lambda}$. An example construction of $\hat{V}_i(x, \boldsymbol{\lambda})$ is shown in Fig.~\ref{fig:lambda}.
Note that since $V_i(\cdot, \cdot)$ is convex \eqref{vdef}, $\hat{V}_i(x, \boldsymbol{\lambda})$ \eqref{eq:Tcost_def_multi} is an upper bound of $V^{\star}_i(x)$ \eqref{eq:Tcost_def}.

For particular predicted state and input trajectories 
\begin{align*}
    &\mathbf{x}^j_t = [x^j_{t|t}, x^j_{t+1|t}, \dots, x^j_{t+N|t}],~\mathbf{u}^j_t = [u^j_{t|t}, u^j_{t+1|t}, \dots, u^j_{t+N-1|t}],
\end{align*}
and $\boldsymbol{\lambda}^j_t\in\Lambda^j(x^j_{t+N|t})$, we define for each $i \in [1, M]$ a running cost estimate:
\begin{align}\label{eq:finitehorizoncost}
    J^j_i(\mathbf{x}^j_t, \mathbf{u}^j_t, 
    \boldsymbol{\lambda}^j_t)=\hat{V}_i^{j-1}(x^j_{t+N|t}, \boldsymbol{\lambda}^j_t)&+\sum_{k=t}^{t+N-1} h_i(x^j_{k|t}, u^j_{k|t}).
\end{align}
Each $J^j_i(\mathbf{x}^j_t, \mathbf{u}^j_t, 
\boldsymbol{\lambda}^j_t)$ estimates the $i$th control objective accumulated from the current state $x^j_t$ to the end of the task. This estimate consists of the the $i$th stage cost accumulated over the predicted $N$-step trajectories $\mathbf{x}^j_t, \mathbf{u}^j_t$ plus the control objective estimate \eqref{eq:Tcost_def_multi} of the last predicted state in the $N$-step trajectory, evaluated using $\boldsymbol{\lambda}^j_t$. 

We now formulate the MO-LMPC problem for a particular choice  $\alpha \in \mathbb{R}^{M}$ where $0 < \alpha_i < 1, ~\sum_{i=1}^M \alpha_i = 1$  (we refer to Sec.~\ref{sec:example} for a discussion on the effect of choosing different $\alpha$). 
At each time $t$ of iteration $j \geq 1$, we solve:
\begin{subequations}\label{eq:OP_LMPC}
	\begin{align}
    J^{\mathrm{MO-LMPC},j}_{t,N}&(x_t^j) =  \nonumber\\
	\min\limits_{\mathbf{x}_t,\mathbf{u}_t, \boldsymbol{\lambda}_t } \quad & \displaystyle \sum\limits_{i=0}^{M}\alpha_iJ_i^j( \mathbf{x}_t, \mathbf{u}_t, \boldsymbol{\lambda}_t) \\
	\text{s.t.}\quad  & x_{k+1|t}=Ax_{k|t}+Bu_{k|t} \label{eq:dynamics}\\
    & x_{k+1|t}\in\mathcal{X},~ u_{k|t} \in \mathcal{U}~~~~~~~~~~~~~~~~~~~~~~~~ \forall k \in [t, t+N-1]\label{eq:stateconstraint}\\
    & x_{t+N|t}\in\mathcal{CS}^j \label{eq:convhullconstraint}\\
    & \boldsymbol{\lambda}_t\in\Lambda^j(x_{t+N|t}) \label{eq:termconstraint}\\
    & x_{t|t} = x^j_t \label{eq:initconstraint}\\
    &J^j_i(\mathbf{x}_t, \mathbf{u}_t, \boldsymbol{\lambda}_t)\leq \begin{cases}
        V_i(x_0, \boldsymbol{\pi}^{j-1}) & t = 0 \\
        J^j_i(\mathbf{x}^{j,\star}_{t-1}, \mathbf{u}^{j,\star}_{t-1}, \boldsymbol{\lambda}^{j,\star}_{t-1})-h_i(x^{j}_{t-1}, u^j_{t-1}) & \text{else}
    \end{cases} \nonumber\\
    & ~~~~~~~~~~~~~~~~~~~~~~~~~~~~~~~~~~~~~~~~~~~~~~~~~~~~~~~~~~~~ \forall i \in [1, M],\label{newconstraint}
	\end{align}
\end{subequations}
where $\mathbf{x}^{j,\star}_{t-1}, \mathbf{u}^{j,\star}_{t-1}, \boldsymbol{\lambda}^{j,\star}_{t-1}$ are the optimizers of $J_{t-1}^{\mathrm{MO-LMPC},j}(x_{t-1}^j)$ \eqref{eq:OP_LMPC} at the previous time step $t-1$. 
The controller then applies the first optimal input 
\begin{align}\label{eq:controlapplication}
u^j_t =\pi^{\mathrm{MO-LMPC}}(x_t) = u^\star_{t|t},
\end{align}
before re-solving \eqref{eq:OP_LMPC} at time $t+1$, resulting in a receding horizon control scheme.

As in single-objective LMPC \eqref{eq:OG_LMPC},
MO-LMPC \eqref{eq:OP_LMPC} searches for a feasible state trajectory that ends in the convex hull of all previous state trajectories.
MO-LMPC \eqref{eq:OP_LMPC} differs from single-objective LMPC \eqref{eq:OG_LMPC} in two main ways. 
First, the objective function in \eqref{eq:OP_LMPC} is now a sum of multiple control objectives. We have used the scalarization procedure outlined in \eqref{eq:cvx_sclrzn} to formulate the multi-objective optimization into a scalar optimization problem, using our choice of $\alpha$ to indicate potential prioritization between different objectives.
Second, we have incorporated a time-varying constraint \eqref{newconstraint}. Next we show this constraint \eqref{newconstraint} ensures that no objectives increase over successive iterations. 

Note \eqref{eq:finitehorizoncost} is a convex function in $\mathbf{x}_t$, $\mathbf{u}_t$, $\boldsymbol{\lambda}_t$, and the quantities $h_i(x^{j}_{t-1}, u^j_{t-1})$ and $J^j_i(\mathbf{x}^{j,\star}_{t-1}, \mathbf{u}^{j,\star}_{t-1}, \boldsymbol{\lambda}^{j,\star}_{t-1})$ are known scalars. 
Thus the MO-LMPC formulation \eqref{eq:OP_LMPC} maintains the convexity of the single-objective LMPC formulation \eqref{eq:OG_LMPC}.

\begin{remark}
\cite{vallon2024learning2} proposes an LMPC formulation which minimizes a single control objective while ensuring performance with respect to additional control objectives does not worsen. In contrast, the MO-LMPC formulation \eqref{eq:OP_LMPC} actively minimizes all control objectives. Results in Sec.~\ref{sec:example} demonstrate the differences between these methods.
Additionally, \eqref{eq:OP_LMPC} allows for a wider range of control objective formulations; control objectives may be any function of the state or input satisfying \eqref{eq:stagecost}.
\end{remark}

\begin{remark}
    The MO-LMPC method outlined here can straightforwardly be applied to systems modeled with nonlinear dynamics by using a sampled safe set $\mathcal{SS}^j$ defined as
$        \mathcal{SS}^j = \bigcup_{r=0}^{j-1} \bigcup_{t\geq 0}\{x^r_t\}$
    in place of the convex set $\mathcal{CS}^j$ or exploiting system theoretic properties (e.g. feedback linearizability \cite{nair2021output}, monotonicity \cite{rosolia2021minimum})  to compute the convex safe set $\mathcal{CS}^j$. The properties of stability and multi-objective performance improvement demonstrated in the following section hold for this nonlinear formulation, though the problem \eqref{eq:OP_LMPC} is no longer convex. 
\end{remark}

\begin{assumption}\label{assmp:cs0}
    At iteration $j=1$, we assume the set $\mathcal{CS}^{j-1} = \mathcal{CS}^0$ contains a feasible state trajectory (with respect to \eqref{eq:inft_mo_opt}) converging to $x_F$ \eqref{eq:xf}.     
\end{assumption}
Assumption~\ref{assmp:cs0} is not restrictive in practice: any initial suboptimal trajectory satisfying the constraints of \eqref{eq:inft_mo_opt} and converging to $x_F$ can initialize $\mathcal{CS}^0$. 

\textit{\textbf{Running Example:} 
In our example, the set $\mathcal{CS}^0$ will contain trajectories between two delivery locations that result in conservative time and energy usages. This might correspond to a controller driving strictly below the speed limit.}

\textit{Note that such a conservative initialization may mean the high-level scheduler can not confidently plan as many delivery routes as desired. As more data is collected and the low-level controller performance improves with each iteration, the time and energy cost estimates along each route are lowered. This will allow the high-level scheduler to assign increasingly productive delivery routes to vehicles.}

\subsection{ Properties}\label{sec:properties}

\begin{theorem}[Feasibility and Stability]\label{thm:feas}
Consider system \eqref{eq:sysdyn_student} controlled by the MO-LMPC
controller \eqref{eq:OP_LMPC} - \eqref{eq:controlapplication}. Let $\mathcal{CS}^j$
be the sampled safe set at
iteration $j$ as defined in \eqref{eq:CS_def}. Let Assumption~\ref{assmp:cs0} hold. Then,
the MO-LMPC \eqref{eq:OP_LMPC} - \eqref{eq:controlapplication} is feasible for all time steps $t \geq 0$ at
every iteration $j \geq 1$. Moreover, the equilibrium point $x_F$ \eqref{eq:xf}
is asymptotically stable for the closed loop system \eqref{eq:sysdyn_student}, \eqref{eq:OP_LMPC} - \eqref{eq:controlapplication} at every iteration $j\geq 1$.
\end{theorem}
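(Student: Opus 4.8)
The plan is to prove feasibility and asymptotic stability simultaneously by a nested induction: an outer induction over iterations $j$ and, within each iteration, an inner induction over the time index $t$. The outer induction hypothesis is that iteration $j-1$ has produced a complete closed-loop trajectory $\{x_t^{j-1},u_t^{j-1}\}_{t\geq 0}$ starting at $x_S$, satisfying the constraints of \eqref{eq:OP_LMPC}, converging to $x_F$, and incurring finite objectives $V_i(x_0,\boldsymbol{\pi}^{j-1})<\infty$; for $j=1$ this is exactly Assumption~\ref{assmp:cs0}. Summability of the closed-loop stage costs (established in the stability argument below) guarantees that each cost-to-go $V_i(x_t^{j-1},\boldsymbol{\pi}^{j-1})=\sum_{k\geq t}h_i(x_k^{j-1},u_k^{j-1})$ is finite, so the terminal approximation \eqref{eq:Tcost_def_multi} and the multiplier set $\Lambda^j$ are well defined from the stored data, closing the outer step once the inner analysis is complete.

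For the inner induction I would first handle the base case $t=0$. Taking the first $N$ steps of the iteration-$(j-1)$ trajectory as $\mathbf{x}_0,\mathbf{u}_0$ and choosing $\boldsymbol{\lambda}_0$ to select the single stored state $x_N^{j-1}$ yields a candidate satisfying the dynamics \eqref{eq:dynamics}, the constraints \eqref{eq:stateconstraint}, the terminal conditions \eqref{eq:convhullconstraint}--\eqref{eq:termconstraint}, and the initial condition \eqref{eq:initconstraint}. For this candidate $J_i^j=\sum_{k=0}^{N-1}h_i(x_k^{j-1},u_k^{j-1})+V_i(x_N^{j-1},\boldsymbol{\pi}^{j-1})=V_i(x_S,\boldsymbol{\pi}^{j-1})$, so \eqref{newconstraint} holds with equality and \eqref{eq:OP_LMPC} is feasible at $t=0$. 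For the inductive step I assume feasibility at time $t$ with optimizers $\mathbf{x}_t^{j,\star},\mathbf{u}_t^{j,\star},\boldsymbol{\lambda}_t^{j,\star}$ and build a candidate at $t+1$ by the standard LMPC shift: reuse $x_{k|t}^{j,\star},u_{k|t}^{j,\star}$ for $k\in[t+1,t+N-1]$ and append the convex-combination input $u_{t+N|t+1}=\sum_{r,s}\lambda^{r,\star}_s u_s^r$ extracted from $\boldsymbol{\lambda}_t^{j,\star}$ (with $s$ the stored-trajectory time index). By linearity of \eqref{eq:sysdyn_student}, $x_{t+N+1|t+1}=\sum_{r,s}\lambda_s^{r,\star}x_{s+1}^r\in\mathcal{CS}^j$, the shifted multipliers $\lambda^r_{s+1}:=\lambda^{r,\star}_s$ lie in $\Lambda^j(x_{t+N+1|t+1})$, and convexity of $\mathcal{X},\mathcal{U}$ gives the remaining constraints.

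The crux, and the step I expect to be the main obstacle, is verifying \eqref{newconstraint} for this candidate. Writing out $J_i^j$ for the candidate and for the previous optimizer and cancelling the common stage-cost terms $\sum_{k=t+1}^{t+N-1}h_i(x_{k|t}^{j,\star},u_{k|t}^{j,\star})$, the required inequality reduces, via the per-trajectory cost-to-go recursion $V_i(x_s^r,\boldsymbol{\pi}^r)=h_i(x_s^r,u_s^r)+V_i(x_{s+1}^r,\boldsymbol{\pi}^r)$, to $h_i(x_{t+N|t}^{j,\star},u_{t+N|t+1})\leq\sum_{r,s}\lambda^{r,\star}_s h_i(x_s^r,u_s^r)$. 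This is precisely Jensen's inequality for the jointly convex stage cost $h_i$ applied to $(x_{t+N|t}^{j,\star},u_{t+N|t+1})=\sum_{r,s}\lambda^{r,\star}_s(x_s^r,u_s^r)$, so it holds for every $i\in[1,M]$; hence the candidate is feasible, which closes the inner induction.

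Finally, for asymptotic stability I would use the optimal scalarized cost $W_t=J^{\mathrm{MO-LMPC},j}_{t,N}(x_t^j)=\sum_{i}\alpha_i J_i^j(\mathbf{x}_t^{j,\star},\mathbf{u}_t^{j,\star},\boldsymbol{\lambda}_t^{j,\star})$ as a Lyapunov function. Applying \eqref{newconstraint} to the optimizers gives the per-objective descent $J_i^j(\mathbf{x}_t^{j,\star},\mathbf{u}_t^{j,\star},\boldsymbol{\lambda}_t^{j,\star})\leq J_i^j(\mathbf{x}_{t-1}^{j,\star},\mathbf{u}_{t-1}^{j,\star},\boldsymbol{\lambda}_{t-1}^{j,\star})-h_i(x_{t-1}^j,u_{t-1}^j)$; the $\alpha$-weighted sum yields $W_t\leq W_{t-1}-\ell(x_{t-1}^j,u_{t-1}^j)$ with $\ell:=\sum_i\alpha_i h_i$. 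Since $\alpha_i>0$ and \eqref{eq:stagecost} makes $\ell$ positive definite with respect to $(x_F,u_F)$, the sequence $W_t$ is nonnegative and nonincreasing, hence convergent, and summing the descent gives $\sum_t\ell(x_t^j,u_t^j)\leq W_0<\infty$, so $\ell(x_t^j,u_t^j)\to0$ and $x_t^j\to x_F$; this convergence also supplies the cost summability invoked in the outer induction. Combining this descent with the lower bound $W_t\geq\ell(x_t^j,u_{t|t}^{j,\star})\geq0$, which vanishes only at $x_F$, and continuity of the value function, the standard Lyapunov argument establishes asymptotic stability of $x_F$ at every iteration $j\geq1$.
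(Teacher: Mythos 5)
Your proposal is correct and follows essentially the same route as the paper's proof: the same $t=0$ candidate built from the previous iteration's trajectory, the same shifted candidate with the convex-combination input $\bar{u}$ whose feasibility for constraint \eqref{newconstraint} reduces to Jensen's inequality on the jointly convex stage cost, and the same scalarized optimal cost used as a time-varying Lyapunov function whose descent follows from \eqref{newconstraint}. Your only addition is making explicit the outer induction over iterations (using the stability-induced cost summability to keep the terminal approximation well defined at the next iteration), which the paper leaves implicit.
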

\begin{proof}
    By Assumption \ref{assmp:cs0}, the set $\mathcal{CS}^0$ is non-empty. Since $\mathcal{CS}^{j-1}\subseteq \mathcal{CS}^j$, it follows that $\mathcal{CS}^j$ is also a non-empty set. Furthermore, we have $x_0^j = x_S,~ \forall j \geq 0$.

    At time $t=0$ of iteration $j \geq 1$, consider the $N$-step state trajectory, corresponding input sequence and convex multiplier vector given by
    \begin{subequations}\label{eq:candidate_traj}
          \begin{align}
        &\mathbf{x}_0' = [x_0^{j-1}, x_1^{j-1}, \dots, x_N^{j-1}],~\mathbf{u}_0' = [u_0^{j-1}, u_1^{j-1}, \dots, u_{N-1}^{j-1}]\\
        & \boldsymbol{\lambda}'_0 = \bigg\{ \bigcup_{r=0}^{j-1} \bigcup_{t \geq 0} \lambda_t^r ~\vert ~ \lambda_N^{j-1}=1,~ \lambda_t^r = 0 ~ \text{else} \bigg\}.
    \end{align}  
    \end{subequations}
    Note \eqref{eq:candidate_traj} correspond to the state and input trajectories taken by the system in the previous iteration $j-1$, and therefore
    satisfy constraints \eqref{eq:dynamics}, \eqref{eq:stateconstraint}, and \eqref{eq:termconstraint}. 
    At time $t=0$, it remains to verify if \eqref{eq:candidate_traj} satisfies constraint \eqref{newconstraint}. From \eqref{eq:finitehorizoncost}, \eqref{eq:Tcost_def_multi} we have for all $i \in [1,M]$:
    \begin{align*}
        J^j_i(\mathbf{x}_0', \mathbf{u}_0', \boldsymbol{\lambda}'_0)
        &= \sum_{t=0}^{N-1}h_i(x_t^{j-1}, u_t^{j-1}) + \hat{V}_i^{j-1}(x_N^{j-1}, \boldsymbol{\lambda}'_0) \\
        &= \sum_{t=0}^{N-1}h_i(x_t^{j-1}, u_t^{j-1}) + \sum_{r=0}^{j-1} \sum_{t\geq 0 } \lambda_t^r V_i(x_t^r, \boldsymbol{\pi}^r) \\
        &= \sum_{t=0}^{N-1}h_i(x_t^{j-1}, u_t^{j-1}) + V_i(x_N^{j-1}, \boldsymbol{\pi}^{j-1}) \\
        &= \sum_{t=0}^{\infty} h_i(x_t^{j-1}, u_t^{j-1}) = V_i(x_S, \boldsymbol{\pi}^{j-1}).
    \end{align*}
    Thus, \eqref{eq:candidate_traj} satisfies \eqref{newconstraint} with equality, and \eqref{eq:candidate_traj} is a feasible solution to \eqref{eq:OP_LMPC} at time $t=0$ of iteration $j\geq1$. 

    Assume the MO-LMPC \eqref{eq:OP_LMPC}-\eqref{newconstraint} is feasible at time $t$ of iteration $j$, with 
\begin{subequations}\label{eq:candidate_traj_2}
          \begin{align}
        &\mathbf{x}^{j,\star}_t = [x_{t|t}^{j,\star}, x_{t+1|t}^{j, \star}, \dots, x_{t+N|t}^{j,\star}],~\mathbf{u}^{j,\star}_t = [u_{t|t}^{j,\star}, u_{t+1|t}^{j, \star}, \dots, u_{t+N-1|t}^{j,\star}] \\
        & \boldsymbol{\lambda}^{j,\star}_t = \bigg\{ \bigcup_{r=0}^{j-1} \bigcup_{k \geq 0} \lambda_{k|t,j}^{r,\star} \bigg\}.
    \end{align}  
    \end{subequations}
    Because of \eqref{eq:controlapplication}, \eqref{eq:initconstraint}, and no model mismatch between \eqref{eq:sysdyn_student}, \eqref{eq:stateconstraint} we have
    $u_t^j = u^{j,\star}_{t|t},~ x_t^j = x^{j,\star}_{t|t},~ x^j_{t+1} = x^{j,\star}_{t+1|t}.$
    It further follows from \eqref{eq:convhullconstraint} that
        $x^{j, \star}_{t+N|t} = \sum_{r = 0}^{j-1} \sum_{k\geq 0} \lambda^{r, \star}_{k|t,j} x^r_k.$
    Let us define 
    \begin{align*}
        \bar{u} = \sum_{r=0}^{j-1} \sum_{k \geq 0 } \lambda^{r, \star}_{k|t,j} u^r_k,~~~ \bar{x} &= \sum_{r=0}^{j-1} \sum_{k \geq 0 } \lambda^{r, \star}_{k|t,j} x^r_{k+1} = \sum_{r=0}^{j-1} \sum_{k \geq 1} \lambda^{r, \star}_{k-1|t,j} x^r_{k},
   \end{align*}
    where $\bar{u} \in \mathcal{U}$, and $\bar{x} \in \mathcal{CS}^{j-1}$.
    Consider the candidate trajectories at time $t+1$,
\begin{align}\label{eq:beepboop}
        &\mathbf{x}_{t+1}' = [x_{t+1|t}^{j, \star}, x_{t+2|t}^{j, \star}, \dots, x_{t+N|t}^{j, \star}, \bar{x}], ~\mathbf{u}_{t+1}' = [u_{t+1|t}^{j, \star}, u_{t+2|t}^{j, \star}, \dots, u_{t+N-1|t}^{j, \star}, \bar{u}]\nonumber \\
        & \boldsymbol{\lambda}'_{t+1} = \{\lambda~|~ \lambda^r_k = \lambda^{r,\star}_{k-1|t,j}, \lambda^r_0 = 0 \}
    \end{align} 
    which satisfy constraints \eqref{eq:dynamics}, \eqref{eq:stateconstraint}, and \eqref{eq:termconstraint}. 
    At time $t+1$, it remains to verify if \eqref{eq:beepboop} satisfies constraint \eqref{newconstraint}.
    From \eqref{eq:finitehorizoncost}, \eqref{eq:Tcost_def_multi}, we have for all $i \in [1, M]$:    
    \begingroup
    \allowdisplaybreaks
\begin{subequations}\label{eq:aa}
        \begin{align}
            J^j_i(&\mathbf{x}_{t+1}',  \mathbf{u}_{t+1}',\boldsymbol{\lambda}'_{t+1}) = \sum_{k=t+1}^{t+N} h_i(x^{j, \star}_{k|t}, u^{j, \star}_{k|t}) + \hat{V}^{j-1}_i(\bar{x}, \boldsymbol{\lambda}'_{t+1})  \\
            &= J^j_i(\mathbf{x}^j_t, \mathbf{u}^j_t, \mathbf{\lambda}^j_t) - h_i(x^{j, \star}_{t|t},u^{j, \star}_{t|t}) + \hat{V}^{j-1}_i(\bar{x}, \boldsymbol{\lambda}'_{t+1}) - \hat{V}^{j-1}_i(x^{j, \star}_{t+N|t}, \boldsymbol{\lambda}^j_t) ~+\nonumber\\
            &~~~~~  + h_i(x^{j, \star}_{t+N|t}, \bar{u}) \\
            &= J^j_i(\mathbf{x}^j_t, \mathbf{u}^j_t, \mathbf{\lambda}^j_t) - h_i(x^{j}_{t},u^{j}_{t}) + \hat{V}^{j-1}_i(\bar{x}, \boldsymbol{\lambda}'_{t+1}) -  \sum_{r = 0}^{j-1} \sum_{k \geq 0} \lambda^{r, \star}_{k | t,j} V_i(x_k^r, \boldsymbol{\pi}^r)~+\nonumber \\
            & ~~~~~+ h_i(x^{j, \star}_{t+N|t}, \bar{u})  \\
            &= J^j_i(\mathbf{x}^j_t, \mathbf{u}^j_t, \mathbf{\lambda}^j_t) - h_i(x^{j}_{t},u^{j}_{t})  + \hat{V}^{j-1}_i(\bar{x}, \boldsymbol{\lambda}'_{t+1}) - 
            \sum_{r = 0}^{j-1} \sum_{k \geq 0} \lambda^{r, \star}_{k | t,j} V_i(x^r_{k+1}, \boldsymbol{\pi}^r) ~+ \nonumber \\
            & ~~~~~ + h_i(x^{j, \star}_{t+N|t}, \bar{u}) -  \sum_{r = 0}^{j-1} \sum_{k \geq 0} \lambda^{r, \star}_{k | t,j} h_i(x_k^r, u_k^r).
        \end{align}
    \end{subequations}
    \endgroup
    From Jensen's inequality, it follows that
    \begin{align}\label{eq:bb}
        h_i(x^{j, \star}_{t+N|t}, \bar{u}) \leq \sum_{r = 0}^{j-1} \sum_{k \geq 0} \lambda^{r, \star}_{k | t,j} h_i(x_k^r, u_k^r).
    \end{align}
    Furthermore, by definition of $\boldsymbol{\lambda}'_{t+1}$ in \eqref{eq:beepboop}, we have
    \begin{align}\label{eq:cc}
        \hat{V}^{j-1}_i(\bar{x}, \boldsymbol{\lambda}'_{t+1}) &= \sum_{r = 0}^{j-1} \sum_{k \geq 1} \lambda^{r,\star}_{k|t,j} V_i(x_k^r, \boldsymbol{\pi}^r) = \sum_{r = 0}^{j-1} \sum_{k \geq 0} \lambda^{r, \star}_{k | t,j} V_i(x^r_{k+1}, \boldsymbol{\pi}^r).
    \end{align}
    From \eqref{eq:aa}, \eqref{eq:bb}, \eqref{eq:cc}, it follows that
    \begin{align*}
        J_i^j(\mathbf{x}_{t+1}', \mathbf{u}_{t+1}',\boldsymbol{\lambda}'_{t+1}) \leq J^j_i(\mathbf{x}^j_t, \mathbf{u}^j_t, \mathbf{\lambda}^j_t) - h_i(x^{j}_{t},u^{j}_{t})~~~ \forall i \in [1,M],
    \end{align*}
    and thus \eqref{eq:beepboop} satisfies \eqref{newconstraint} and is a feasible solution to \eqref{eq:OP_LMPC} at time $t+1$ of iteration $j$. We have shown that at iteration $j \geq 1$, the MO-LMPC is feasible at time $t=0$, and that feasibility of the MO-LMPC at time $t$ implies feasibility at time $t+1$. We conclude by induction that the MO-LMPC \eqref{eq:OP_LMPC}-\eqref{eq:controlapplication} is feasible at all time $t \geq 0$ of all iterations $j \geq 1$.

    Next we show asymptotic stability of $x_F$ \eqref{eq:xf} under the MO-LMPC policy \eqref{eq:OP_LMPC}-\eqref{eq:controlapplication} by showing that the cost function $J^{\mathrm{MO-LMPC},j}_{t,N}(\cdot)$ \eqref{eq:OP_LMPC} is a time-varying Lyapunov function for the closed-loop system \eqref{eq:sysdyn_student}, \eqref{eq:controlapplication} for the equilibrium state $x_F$ \eqref{eq:xf}. 
    Continuity of $J^{\mathrm{MO-LMPC},j}_{t,N}(\cdot)$ can be shown from \cite{dwayne}, and
    it follows from \eqref{eq:stagecost} that 
    \begin{align}
        &J^{\mathrm{MO-LMPC},j}_{t,N}(x) \succ 0 ~\forall x \in \mathbb{R}^{n_x} \backslash \{x_F\},~ J^{\mathrm{MO-LMPC},j}_{t,N}(x_F) = 0.
    \end{align}
    Moreover, recursive feasibility of \eqref{eq:OP_LMPC} and compactness of the constraints imply that $J^{\mathrm{MO-LMPC},j}_{t,N}(x)$ can be bounded above by some positive definite function $W(x)$ independent of time $t$.
    It remains to show that $J^{\mathrm{MO-LMPC}, j}_{t,N}(\cdot)$ decreases along the closed-loop trajectory, where
    \begin{align}
    J^{\mathrm{MO-LMPC}, j}_{t,N}(x^j_t) = \sum_{i=0}^{M-1} \alpha_i J^j_i(\mathbf{x}^{j,\star}_t, \mathbf{u}^{j,\star}_t, \boldsymbol{\lambda}^{j, \star}_t).
    \end{align}
    Note constraint \eqref{newconstraint} ensures that for all $i \in [1,M]$
    \begin{align}
        J^j_i(\mathbf{x}^{j,\star}_t, \mathbf{u}^{j,\star}_t, \boldsymbol{\lambda}^{j, \star}_t) \leq  J^j_i(\mathbf{x}^{j,\star}_{t-1}, \mathbf{u}^{j,\star}_{t-1}, \boldsymbol{\lambda}^{j, \star}_{t-1}) - h_i(x^j_{t-1}, u^j_{t-1}), \nonumber 
    \end{align}
    and since all $\alpha_i \geq 0$, it follows that
    \begin{align}
        \sum_{i=1}^M \alpha_i J^j_i(\mathbf{x}^{j,\star}_t, \mathbf{u}^{j,\star}_t, \boldsymbol{\lambda}^{j, \star}_t) &\leq  \sum_{i=1}^M \alpha_i J^j_i(\mathbf{x}^{j,\star}_{t-1}, \mathbf{u}^{j,\star}_{t-1}, \boldsymbol{\lambda}^{j, \star}_{t-1}) - \sum_{i=1}^M \alpha_i h_i(x^j_{t-1}, u^j_{t-1}), \nonumber
    \end{align}
    and therefore
    \begin{align}\label{eq:stabilityresult}
        J^{\mathrm{MO-LMPC}, j}_{t,N}(x_t^j) -  J^{\mathrm{MO-LMPC}, j}_{t-1,N}(x_{t-1}^j) & \leq - \sum_{i=1}^M \alpha_i h_i(x^j_{t-1}, u^j_{t-1}) < 0.
    \end{align}
    
    Thus, by \eqref{eq:stabilityresult}, positive definiteness of each stage cost $h_i(\cdot, \cdot)$ \eqref{eq:stagecost}, and continuity of $J^{\mathrm{MO-LMPC}, j}_{t,N}{\cdot}$ \eqref{eq:finitehorizoncost}, we conclude that the state $x_F$ \eqref{eq:xf} is asymptotically stable for the system \eqref{eq:sysdyn_student} in closed-loop with the MO-LMPC \eqref{eq:OP_LMPC}-\eqref{eq:controlapplication}\hfill $\blacksquare$
\end{proof}

\begin{theorem}[Performance Improvement]\label{thm:cost}
   Consider system \eqref{eq:sysdyn_student} in closed-loop with the MO-LMPC controller \eqref{eq:OP_LMPC}-\eqref{eq:controlapplication}. Let $\mathcal{CS}^{j}$ be the sampled safe set at the iteration $j$ as defined in \eqref{eq:CS_def}. Let Assumption \eqref{assmp:cs0} hold. 
   Then, no individual control objective $V_i$ increases over successive iterations $j$ and $j-1$.
   i.e. $V_i(x_S, \boldsymbol{\pi}^j) \leq V_i(x_S, \boldsymbol{\pi}^{j-1}),~ \forall i \in [1,M].$
\end{theorem}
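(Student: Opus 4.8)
The plan is to exploit the telescoping structure deliberately engineered into constraint \eqref{newconstraint}, which at each time step forces the optimal running cost estimate to drop by at least the stage cost just incurred. Writing $J_i^{j,\star}(t) := J^j_i(\mathbf{x}^{j,\star}_t, \mathbf{u}^{j,\star}_t, \boldsymbol{\lambda}^{j,\star}_t)$ for the optimal value of the $i$th running cost at time $t$ of iteration $j$, the idea is to chain these per-step decreases into a bound on the total closed-loop stage cost of iteration $j$, and then compare the $t=0$ term against $V_i(x_S,\boldsymbol{\pi}^{j-1})$.

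First I would record that $J_i^{j,\star}(t)\geq 0$ for every $t$: the stage costs $h_i$ are nonnegative by \eqref{eq:stagecost}, and the terminal estimate $\hat{V}_i^{j-1}(\cdot,\cdot)$ \eqref{eq:Tcost_def_multi} is a convex combination of the previously realized objective values $V_i(x_t^r,\boldsymbol{\pi}^r)$, each itself a sum of nonnegative stage costs and hence nonnegative. Next, recursive feasibility (Theorem \ref{thm:feas}) guarantees that the optimizer at every $t\geq 1$ satisfies the ``else'' branch of \eqref{newconstraint}, giving
\[
J_i^{j,\star}(t) \leq J_i^{j,\star}(t-1) - h_i(x_{t-1}^j, u_{t-1}^j), \qquad \forall i \in [1,M].
\]
Telescoping from $t=1$ to $t=T$ then yields, for every horizon $T\geq 1$,
\[
J_i^{j,\star}(T) \leq J_i^{j,\star}(0) - \sum_{t=0}^{T-1} h_i(x_t^j, u_t^j).
\]

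The $t=0$ branch of \eqref{newconstraint} supplies $J_i^{j,\star}(0)\leq V_i(x_S,\boldsymbol{\pi}^{j-1})$ (indeed the feasibility argument of Theorem \ref{thm:feas} shows this holds with equality for the shifted previous-iteration trajectory). Combining this with the lower bound $J_i^{j,\star}(T)\geq 0$ and rearranging gives
\[
\sum_{t=0}^{T-1} h_i(x_t^j, u_t^j) \leq J_i^{j,\star}(0) - J_i^{j,\star}(T) \leq V_i(x_S,\boldsymbol{\pi}^{j-1}).
\]
Since the summands are nonnegative, the partial sums are nondecreasing and uniformly bounded by $V_i(x_S,\boldsymbol{\pi}^{j-1})$; by monotone convergence they converge, and letting $T\to\infty$ identifies the limit with the closed-loop cost $V_i(x_S,\boldsymbol{\pi}^j)=\sum_{t=0}^\infty h_i(x_t^j,u_t^j)$ from \eqref{vdef}, which establishes $V_i(x_S,\boldsymbol{\pi}^j)\leq V_i(x_S,\boldsymbol{\pi}^{j-1})$ for each $i$.

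The main point requiring care is the infinite-horizon limit rather than any genuine obstacle: I must confirm that $V_i(x_S,\boldsymbol{\pi}^j)$ is well-defined and finite, which the monotone bounded partial sums furnish directly. It is worth emphasizing that nonnegativity of the terminal estimate is exactly what lets me discard $J_i^{j,\star}(T)$ without proving $J_i^{j,\star}(T)\to 0$; the asymptotic stability from Theorem \ref{thm:feas} (ensuring $x_t^j\to x_F$) is consistent with this but is not strictly needed to close the inequality. The crux is simply recognizing that \eqref{newconstraint} encodes a per-objective value-decrease condition that telescopes, so that performance improvement follows for each objective independently even though the MO-LMPC optimizes only the scalarized sum.
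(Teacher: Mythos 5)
Your proof is correct and follows the same skeleton as the paper's: both arguments telescope the per-step decrease encoded in constraint \eqref{newconstraint} and combine it with the $t=0$ bound $J^j_i(\mathbf{x}^{j,\star}_0, \mathbf{u}^{j,\star}_0, \boldsymbol{\lambda}^{j,\star}_0)\leq V_i(x_S,\boldsymbol{\pi}^{j-1})$ from \eqref{eq:upper-bound}. The one genuine difference is how the telescoped tail term is eliminated. The paper sends $k\to\infty$ and argues that $J_i^j(\mathbf{x}_k^{j,\star}, \mathbf{u}_k^{j,\star}, \boldsymbol{\lambda}^{j,\star}_k)\to 0$, invoking the convergence $x_t^j\to x_F$ from Theorem~\ref{thm:feas} and continuity of the $h_i$, which yields the exact lower bound \eqref{eq:lowerbound}. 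You instead discard the tail at every finite horizon $T$ using only $J_i^{j,\star}(T)\geq 0$ --- which follows from nonnegativity of the stage costs \eqref{eq:stagecost} and of the terminal estimate \eqref{eq:Tcost_def_multi} --- and then close with monotone convergence of the bounded, nondecreasing partial sums. Your route is more elementary and, in one respect, more careful: the paper's claim that the running cost estimate vanishes in the limit requires not only continuity of $h_i$ but also that the terminal estimate $\hat V_i^{j-1}$ evaluated at the optimal multipliers vanishes as the state approaches $x_F$, a point the paper does not fully spell out; your argument needs none of that, nor the asymptotic-stability half of Theorem~\ref{thm:feas}. What you give up is the two-sided sandwich \eqref{eq:individual-cost-result}: the stronger lower bound \eqref{eq:lowerbound} is reused in the proof of Theorem~\ref{thm:pareti}, so the paper has an independent reason to establish it, but for Theorem~\ref{thm:cost} itself your argument is complete as written.
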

\begin{proof}
    At time $t=0$ of iteration $j$, the constraint \eqref{newconstraint} enforces $\forall i \in [1,M]$
    \begin{align}
        J^j_i(\mathbf{x}^{j,\star}_0, \mathbf{u}^{j,\star}_0, \boldsymbol{\lambda}^{j,\star}_0) &\leq V_i (x_0^j, \boldsymbol{\pi}^{j-1}) =\sum_{t=0}^{\infty} h_i(x_t^{j-1}, u_t^{j-1}).\label{eq:upper-bound}
    \end{align}
    Similarly, the constraints at time $t \geq 1$ enforce 
    \begin{align}
        J^j_i(\mathbf{x}^{j,\star}_0, \mathbf{u}^{j,\star}_0, \boldsymbol{\lambda}^{j,\star}_0) &\geq J_i^j(\mathbf{x}^{j,\star}_1, \mathbf{u}^{j,\star}_1, \boldsymbol{\lambda}^{j,\star}_1) + h_i(x^j_0, u^j_0) \nonumber\\
        & \geq  J_i^j(\mathbf{x}^{j,\star}_2, \mathbf{u}^{j,\star}_2, \boldsymbol{\lambda}^{j,\star}_2) + \sum_{t=0}^1 h_i(x_t^j, u_t^j) \nonumber\\
        & \geq \lim_{k \rightarrow \infty} \big[J_i^j(\mathbf{x}_k^{j,\star}, \mathbf{u}_k^{j,\star}, \boldsymbol{\lambda}^{j,\star}_k)~ + \sum_{t=0}^{k-1}h_i(x_t^j, u_t^j)\big].
    \end{align}
    From Thm.~\ref{thm:feas}, we have $\lim_{t \rightarrow \infty} x_t^j = x_F,$
    and therefore by continuity of each $h_i(\cdot, \cdot)$, $\lim_{t \rightarrow \infty } J_i^j(\mathbf{x}_t^{j,\star}, \mathbf{u}_t^{j,\star}, \boldsymbol{\lambda}^{j,\star}_t) = 0,$
    resulting in 
\begin{align}\label{eq:lowerbound}
        J_i^j(\mathbf{x}_0^j, \mathbf{u}_0^j, \boldsymbol{\lambda}^j_0) \geq \sum_{t=0}^{\infty} h_i(x_t^j, u_t^j).
    \end{align}
    From \eqref{eq:upper-bound} and \eqref{eq:lowerbound} it follows that 
    \begin{align}\label{eq:individual-cost-result}
        \sum_{t=0}^{\infty} h_i(x_t^{j-1}, u_t^{j-1}) \geq J_i^j(\mathbf{x}_0^j, \mathbf{u}_0^j, \boldsymbol{\lambda}^j_0) \geq \sum_{t=0}^{\infty} h_i(x_t^{j}, u_t^{j}),
    \end{align}
    from which we conclude that each objective $V_i$ exhibits non-increase between successive iterations. 
   \hfill$\blacksquare$
    
\end{proof}

Next, we consider the convergence properties of the MO-LMPC controller \eqref{eq:OP_LMPC}-\eqref{eq:controlapplication}. We will make use of the following Lemma, which compares the optimal solutions between the optimal control problem
\begin{subequations}\label{eq:constrained}
    \begin{align}
    \min_{x \in \mathbb{R}^{n_x}} & \sum_{i=1}^M \alpha_i f_i(x) \\
    \text{s.t. }& Ax = b \\
    & Hx \leq g \\
    & f_i(x) \leq r_i ~~ \forall i \in [1,M]
\end{align}
\end{subequations}
where $A \in \mathbb{R}^{n_b \times n_x}$, $b \in \mathbb{R}^{n_b}$, $H \in \mathbb{R}^{n_g \times n_x}$, and $g \in \mathbb{R}^{n_g}$,
and the variation
\begin{subequations}\label{eq:unconstrained}
    \begin{align}
    \min_{x \in \mathbb{R}^{n_x}} & \sum_{i=1}^M \bar{\alpha}_i f_i(x) \\
    \text{s.t. }& Ax = b \\
    & Hx \leq g
\end{align}
\end{subequations}
where the scalarization coefficients $\alpha_i > 0$ and $\bar{\alpha}_i > 0$ weight different convex objectives $f_i(x) ~ i \in [1,M]$. 

\begin{lemma}\label{lem:equiv_mo_prob}
    Let Slater's condition hold for both the problems \eqref{eq:constrained} and \eqref{eq:unconstrained}. Assume \eqref{eq:constrained} is feasible, with optimal solution $x^{\star}$. Then, $x^{\star}$ is also the optimal solution to \eqref{eq:unconstrained} for a particular choice of $\bar{\alpha} > 0$. 
\end{lemma}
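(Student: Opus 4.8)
The plan is to prove the claim through the Karush--Kuhn--Tucker (KKT) conditions, exploiting the structural fact that the extra constraints $f_i(x)\le r_i$ in \eqref{eq:constrained} involve exactly the same convex functions $f_i$ that appear in the scalarized objective. Since each $f_i$ is convex and the remaining constraints $Ax=b$, $Hx\le g$ are affine, both \eqref{eq:constrained} and \eqref{eq:unconstrained} are convex programs. Slater's condition for \eqref{eq:constrained} serves as the constraint qualification, so the KKT conditions are \emph{necessary} at the optimizer $x^{\star}$; for the convex program \eqref{eq:unconstrained} the KKT conditions are \emph{sufficient} for global optimality. The whole argument is then to show that the same primal point $x^{\star}$, with a suitably reweighted objective, already satisfies the KKT system of \eqref{eq:unconstrained}.

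First I would write the Lagrangian of \eqref{eq:constrained},
\[
L(x,\nu,\mu,\gamma)=\sum_{i=1}^M \alpha_i f_i(x)+\nu^\top(Ax-b)+\mu^\top(Hx-g)+\sum_{i=1}^M \gamma_i\bigl(f_i(x)-r_i\bigr),
\]
with multipliers $\mu\ge 0$ for $Hx\le g$ and $\gamma\ge 0$ for $f_i(x)\le r_i$, and record the KKT conditions at $x^{\star}$ with optimal multipliers $(\nu^{\star},\mu^{\star},\gamma^{\star})$: stationarity, primal and dual feasibility, and complementary slackness for both inequality blocks. The crucial observation is that stationarity groups the objective and constraint contributions of each \emph{same} function: there exist subgradients $s_i^{\star}\in\partial f_i(x^{\star})$ such that
\[
\sum_{i=1}^M(\alpha_i+\gamma_i^{\star})\,s_i^{\star} + A^\top\nu^{\star} + H^\top\mu^{\star} = 0.
\]

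The key step is then to set $\bar{\alpha}_i := \alpha_i + \gamma_i^{\star}$. Because $\alpha_i>0$ and $\gamma_i^{\star}\ge 0$ by dual feasibility, we obtain $\bar{\alpha}_i>0$ for every $i$, as required. I would then check that $x^{\star}$ together with the multipliers $(\nu^{\star},\mu^{\star})$ satisfies the KKT conditions of \eqref{eq:unconstrained} under this $\bar{\alpha}$: stationarity is precisely the displayed identity above (using the same subgradients $s_i^{\star}$); primal feasibility $Ax^{\star}=b$, $Hx^{\star}\le g$ and dual feasibility $\mu^{\star}\ge 0$ transfer verbatim from \eqref{eq:constrained}; and complementary slackness for $Hx\le g$ is unchanged, while the constraints $f_i(x)\le r_i$ and their multipliers $\gamma_i^{\star}$ have been fully absorbed into the weights. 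Sufficiency of KKT for the convex program \eqref{eq:unconstrained} then yields that $x^{\star}$ is a global minimizer for these weights.

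The main obstacle I anticipate is handling possible nondifferentiability of the $f_i$: the stationarity condition must be expressed in subdifferential form, and the absorption argument depends on selecting the \emph{same} subgradient $s_i^{\star}\in\partial f_i(x^{\star})$ in both problems. Provided the $f_i$ are finite-valued convex functions, the Moreau--Rockafellar sum rule gives $\partial\!\left(\sum_i c_i f_i\right)(x^{\star})=\sum_i c_i\,\partial f_i(x^{\star})$, so the grouping $(\alpha_i+\gamma_i^{\star})\,\partial f_i(x^{\star})$ is legitimate and the shared-subgradient choice is available; in the differentiable case this reduces to the gradient identity with $s_i^{\star}=\nabla f_i(x^{\star})$, and the argument is immediate.
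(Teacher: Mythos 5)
Your proposal is correct and follows essentially the same route as the paper: write the KKT conditions for \eqref{eq:constrained}, absorb the multipliers of the constraints $f_i(x)\le r_i$ into the scalarization weights via $\bar{\alpha}_i=\alpha_i+\gamma_i^{\star}$, and verify that $x^{\star}$ with the remaining multipliers satisfies the (sufficient) KKT system of \eqref{eq:unconstrained}. Your additional care with subdifferentials for nondifferentiable $f_i$ is a minor refinement over the paper's gradient-based statement but does not change the argument.
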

\begin{proof}
    Consider the convex optimization problem \eqref{eq:constrained}. For any optimal solution $x^{\star}$, by strong duality (from Slater's condition) there exist corresponding KKT multipliers $\lambda^{\star} \in \mathbb{R}^{n_g + M + 1}$ and $\nu^{\star} \in \mathbb{R}^{n_b}$ jointly satisfying
    \begin{subequations}\label{eq:constrainedKKT}
        \begin{align}
        &Ax^{\star}= b \\
        &Hx^{\star} \leq g \\
        &f_i(x^{\star}) \leq r_i ~~~ \forall i \in [1,M]\\
        &\lambda^{\star} \geq 0\\
        &\lambda^{\star}_i(f_i(x^{\star}) - r_i) = 0 ~~~ \forall i \in [1,M] \\
        &\lambda^{\star\top}_{M : M+n_g}(Hx^{\star}-g)=0\\
        &\sum_{i=0}^{M}\alpha_i \nabla f_i(x^{\star}) + \sum_{i=1}^M \lambda_i^{\star} \nabla f_i(x^{\star}) + H^{\top} \lambda^{\star}_{M:M+n_g} + A^{\top}\nu^{\star} = 0.
    \end{align}
    \end{subequations}
    Similarly, we know that any optimal solution $\bar{x}^{\star}$ to \eqref{eq:unconstrained} must be associated with corresponding KKT multipliers $\bar{\lambda}^{\star} \in \mathbb{R}^{M}$ and $\bar{\nu}^{\star} \in \mathbb{R}^{n_g}$ that jointly satisfy    \begin{subequations}\label{eq:unconstrainedKKT}
        \begin{align}
        &A\bar{x}^{\star}= b \\
        &H\bar{x}^{\star} \leq g \\
        &\bar{\lambda}^{\star} \geq 0\\
        &\bar{\lambda}^{\star\top}(H\bar{x}^{\star}-g)=0\\
        &\sum_{i=0}^{M}\bar{\alpha}_i \nabla f_i(\bar{x}^{\star}) + H^{\top} \bar{\lambda}^{\star} + A^{\top}\bar{\nu}^{\star} = 0.
    \end{align}
    \end{subequations}
    Now, assume 
    \begin{align}\label{eq:baralphaidef}
        \bar{\alpha}_i = \alpha_i + \lambda^{\star}_i
    \end{align}
    Then, 
    $
        \bar{x}^{\star} = x^{\star},~ \bar{\lambda}^{\star} = \lambda^{\star}_{0:M},~~ \bar{\nu}^{\star} = \nu^{\star}
    $
    satisfy \eqref{eq:unconstrainedKKT}. 
    Note that since $\lambda^{\star}\geq 0$ and $\alpha > 0$, it follows that $\bar{\alpha}_i > 0$ when defined according to \eqref{eq:baralphaidef}. 
    Thus we conclude that the optimal solution $x^{\star}$ to \eqref{eq:constrained} is also the optimal solution $\bar{x}^{\star}$ to \eqref{eq:unconstrained} for a particular choice of multipliers $\bar{\alpha}_i >0$. \hfill
    $\blacksquare$
\end{proof}

\begin{assumption}\label{assmp:strng_cvx}
    There exists at least one $i'\in[1, M]$ such that $h_{i'}(\cdot, \cdot)$ is strongly convex and $\alpha_{i'}>0$.
\end{assumption}
\begin{theorem}
    [Pareto Optimality]\label{thm:pareti}
   Consider system \eqref{eq:sysdyn_student} in closed-loop with the MO-LMPC controller \eqref{eq:OP_LMPC}-\eqref{eq:controlapplication}. Let $\mathcal{CS}^{j}$ be the sampled safe set at the iteration $j$ as defined in \eqref{eq:CS_def}. Let Assumptions \eqref{assmp:cs0} and \eqref{assmp:strng_cvx} hold. 
   Assume that after a finite number of iterations $c$, the closed-loop trajectory converges to fixed-point state and input trajectories such that $(\mathbf{x}^j, \mathbf{u}^j)=(\mathbf{x}^\infty, \mathbf{u}^\infty)$, $\forall j\geq c$. Then the policy $\boldsymbol{\pi}^j$ inducing the closed-loop trajectory $(\mathbf{x}^\infty, \mathbf{u}^\infty)$ is a Pareto optimal solution to the infinite-horizon multi-objective optimal control problem \eqref{eq:inft_mo_opt}.
\end{theorem}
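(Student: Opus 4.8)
The plan is to exploit Lemma~\ref{lem:equiv_mo_prob} to rewrite the converged MO-LMPC as an \emph{unconstrained} scalarized problem with strictly positive weights, recognize that problem as an ordinary single-objective LMPC, and then combine the single-objective optimality result of \cite{rosolia2022optimality} with the scalarization characterization of Pareto optimality from Sec.~\ref{ssec:moo}.

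First I would fix an iteration $j \geq c$ and examine the MO-LMPC \eqref{eq:OP_LMPC} solved at $t=0$. Bundling the decision variables $(\mathbf{x}_0,\mathbf{u}_0,\boldsymbol{\lambda}_0)$ into a single vector, the dynamics \eqref{eq:dynamics}, the initial condition \eqref{eq:initconstraint} and the terminal membership \eqref{eq:convhullconstraint}--\eqref{eq:termconstraint} become linear equalities $Ax=b$, the state/input constraints \eqref{eq:stateconstraint} become $Hx\le g$, and the performance constraints \eqref{newconstraint} become $f_i(x) := J^j_i \le r_i$ with $r_i = V_i(x_S,\boldsymbol{\pi}^{j-1})$. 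Thus \eqref{eq:OP_LMPC} at $t=0$ is an instance of \eqref{eq:constrained} with convex objectives $J^j_i$. Because the trajectory has converged, $\boldsymbol{\pi}^{j-1}$ and $\boldsymbol{\pi}^{j}$ induce the same $(\mathbf{x}^\infty,\mathbf{u}^\infty)$, so the squeeze \eqref{eq:individual-cost-result} collapses to $J^j_i(\mathbf{x}^\infty,\mathbf{u}^\infty,\boldsymbol{\lambda}^\infty)=V_i(x_S,\boldsymbol{\pi}^\infty)=r_i$; that is, every performance constraint \eqref{newconstraint} is active at the optimum.

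Next I would invoke Lemma~\ref{lem:equiv_mo_prob}: the converged optimizer of \eqref{eq:OP_LMPC} is also the optimizer of the unconstrained scalarization \eqref{eq:unconstrained} for the weights $\bar\alpha_i=\alpha_i+\lambda^\star_i>0$, where $\lambda^\star_i$ are the KKT multipliers of the active constraints \eqref{newconstraint}. The key observation is that this unconstrained $\bar\alpha$-weighted problem is \emph{exactly} a single-objective LMPC \eqref{eq:OG_LMPC} for the aggregate stage cost $\bar h(x,u)=\sum_i\bar\alpha_i h_i(x,u)$: since a single multiplier vector $\boldsymbol{\lambda}$ is shared across all objectives in \eqref{eq:Tcost_def_multi}, the weighted terminal cost $\sum_i\bar\alpha_i\hat V_i(\cdot,\boldsymbol{\lambda})$ is precisely the barycentric interpolation of $\sum_i\bar\alpha_i V_i$, i.e. the LMPC terminal cost for $\bar h$. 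Moreover $\bar h$ inherits the properties \eqref{eq:stagecost} (it is convex, vanishes only at $(x_F,u_F)$, and is positive definite elsewhere since $\bar\alpha>0$ and some $h_i$ is strictly positive), so the single-objective LMPC theory applies. Consequently $(\mathbf{x}^\infty,\mathbf{u}^\infty)$ is a converged single-objective LMPC trajectory for $\bar h$, and by \cite{rosolia2022optimality} it attains the optimal infinite-horizon cost, i.e. $\sum_i\bar\alpha_i V_i(x_S,\boldsymbol{\pi}^\infty)=\min_{\boldsymbol{\pi}}\sum_i\bar\alpha_i V_i(x_S,\boldsymbol{\pi})$ over all policies feasible for \eqref{eq:inft_mo_opt}. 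Since $\bar\alpha_i>0$ and each $V_i$ as well as the feasible set of \eqref{eq:inft_mo_opt} are convex, the scalarization argument of Sec.~\ref{ssec:moo} then yields that $\boldsymbol{\pi}^\infty$ is Pareto optimal for \eqref{eq:inft_mo_opt}.

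I expect the main obstacle to be the two bridging steps rather than the algebra. On one side, Lemma~\ref{lem:equiv_mo_prob} requires Slater's condition and the activity of \eqref{newconstraint}; the activity is secured by the fixed-point collapse of \eqref{eq:individual-cost-result}, but I would need to confirm that a strictly feasible point exists for the converged program so that strong duality and the multiplier identity $\bar\alpha=\alpha+\lambda^\star$ are valid. On the other side, passing from optimality of the \emph{finite-horizon} $\bar h$-LMPC to optimality over the \emph{infinite-horizon} problem \eqref{eq:inft_mo_opt} is where I must lean on \cite{rosolia2022optimality}: I would argue that, at the converged safe set, the $\bar h$-LMPC reproduces $(\mathbf{x}^\infty,\mathbf{u}^\infty)$ and is therefore itself at a fixed point, so its converged cost equals the infinite-horizon optimum for $\bar h$. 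Finally, Assumption~\ref{assmp:strng_cvx} is what makes the weighted-sum minimizer unique, ruling out merely weakly Pareto points and certifying Pareto optimality in the strict sense of the Definition.
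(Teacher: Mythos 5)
Your endgame coincides with the paper's: both arguments funnel through Lemma~\ref{lem:equiv_mo_prob} to absorb the performance constraints \eqref{newconstraint} into strictly positive weights $\bar{\alpha}_i=\alpha_i+\lambda^\star_i$, and both then invoke the weighted-sum characterization of Pareto optimality for convex problems. Your observation that the constraints \eqref{newconstraint} are active at the fixed point (via the collapse of \eqref{eq:individual-cost-result}) is correct, though not actually needed for the lemma, and your Slater caveat applies equally to the paper, which simply assumes it. The divergence is in the middle: the paper shows directly that the converged trajectory is the optimizer of \eqref{eq:OP_LMPC} for \emph{every} horizon $N$ and every $t$, passes to the limit $N\rightarrow\infty$ of the $t=0$ problem, uses Assumption~\ref{assmp:strng_cvx} to show the terminal constraint and terminal cost vanish in that limit (step \eqref{eq:strng_cvx}), and only then applies Lemma~\ref{lem:equiv_mo_prob} to identify the limit with the scalarized infinite-horizon problem \eqref{eq:equiv_MOopt}. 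You instead keep $N$ finite and outsource the finite-to-infinite-horizon step to the single-objective LMPC optimality theory.

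That outsourcing is where the gap is. Lemma~\ref{lem:equiv_mo_prob} gives you an equivalent unconstrained scalarization for \emph{one instance} of the optimization problem, with $\bar{\alpha}$ determined by the KKT multipliers of that instance. But the right-hand side of \eqref{newconstraint} is time-varying within an iteration, so the multipliers $\lambda^\star_i$ --- and hence $\bar{\alpha}$ --- may differ at each $t$. To claim that $(\mathbf{x}^\infty,\mathbf{u}^\infty)$ is ``a converged single-objective LMPC trajectory for $\bar{h}=\sum_i\bar{\alpha}_ih_i$'' you need a single, time-invariant $\bar{\alpha}$ for which the receding-horizon closed loop of \eqref{eq:OG_LMPC} reproduces the MO-LMPC closed loop at every time step, and your argument only establishes the equivalence at $t=0$. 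Moreover, \cite{rosolia2022optimality} concerns asymptotic convergence of the iteration cost; the statement you actually need --- that a fixed-point trajectory of a single-objective LMPC is infinite-horizon optimal --- is the fixed-point result of \cite{rosolia2019learning}, and applying it still presupposes the closed-loop identification above. Finally, you assign Assumption~\ref{assmp:strng_cvx} the role of ensuring uniqueness of the weighted-sum minimizer; that is not where the paper needs it (strict positivity of $\bar{\alpha}$ already gives Pareto rather than weak Pareto optimality), and in your route you would still need some substitute for the paper's use of strong convexity to handle the terminal set and terminal cost as the horizon grows. As written, the bridge from the finite-horizon equivalence to optimality for \eqref{eq:inft_mo_opt} is not closed.
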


\begin{proof}
    First, we show that $x^\infty_t,\dots, x^\infty_{t+N}$, $u^\infty_t,\dots, u^\infty_{t+N-1}$ is the optimizer of \eqref{eq:OP_LMPC} for any $N$, $\forall t\geq 0$ and $\forall j\geq c+1$. Then we show that the closed-loop trajectory $(\mathbf{x}^\infty, \mathbf{u}^\infty)$ can be equivalently given by the solution to \eqref{eq:OP_LMPC} at $t=0$ as $N\rightarrow\infty$. Then we show that \eqref{eq:OP_LMPC} as $N\rightarrow\infty$ is equivalent to the scalarization of \eqref{eq:inft_mo_opt}, with $(\mathbf{x}^\infty, \mathbf{u}^\infty)$ as its optimizer.

    For any $j\geq c+1$, we have from \eqref{eq:lowerbound} and (31) that at $x^j_0 = x^\infty_0$,
    \begin{subequations}
        \begin{align}
        J^{\mathrm{MO-LMPC},j}_{0,N}(x^j_0) &\geq \sum_{i=1}^M \alpha_i \sum_{k=0}^{\infty} h_i(x_k^{j}, u_k^{j})\\
        &= \sum_{i=1}^M \alpha_i \sum_{k=t}^{N-1} h_i(x_k^{j}, u_k^{j}) + \sum_{i=1}^M \alpha_i \sum_{k=t+N}^{\infty} h_i(x_k^{j}, u_k^{j})\\
        &= \sum_{i=1}^M \alpha_i \sum_{k=0}^{N-1} h_i(x_k^{c}, u_k^{c}) + \sum_{i=1}^M \alpha_i \sum_{k=N}^{\infty} h_i(x_k^{c}, u_k^{c})\\
        & = \sum_{i=1}^M \alpha_i \sum_{k=0}^{N-1} h_i(x_k^{c}, u_k^{c}) +\sum_{i=1}^M \alpha_i \hat{V}_i^{j}(x_{N}^{c}, \bar{\boldsymbol{\lambda}}_0)
    \end{align}
    \end{subequations}
    where $\bar{\boldsymbol{\lambda}}_0\in\Lambda^j(x^j_{N})$ is such that $\lambda^c_{N}=1$. The second equality is obtained because $(\mathbf{x}^j, \mathbf{u}^j) = (\mathbf{x}^\infty, \mathbf{u}^\infty) $, $\forall j\geq c+1$ and the last equality is obtained using the definition \eqref{eq:Tcost_def_multi}. Notice that because $(\mathbf{x}^j, \mathbf{u}^j) = (\mathbf{x}^\infty, \mathbf{u}^\infty) $, we have that the state-input trajectory $\bar{\mathbf{x}}^\infty_{0:N} =[x^\infty_0,\dots, x^\infty_{N}]$, $\bar{\mathbf{u}}^\infty_{0:N}=[u^\infty_0,\dots, u^\infty_{N-1}]$, $\bar{\boldsymbol{\lambda}}_0$ is a feasible solution to \eqref{eq:OP_LMPC} at time $t=0$, and also provides a lower bound for $J^{\mathrm{MO-LMPC},j}_{0,N}(x^j_0)$. Hence, it must be optimal. Now suppose $\bar{\mathbf{x}}^\infty_{t:t+N} =[x^\infty_t,\dots, x^\infty_{t+N}]$, $\bar{\mathbf{u}}^\infty_{t:t+N-1} =[u^\infty_t,\dots, u^\infty_{t+N-1}]$ and multipliers $\bar{\boldsymbol{\lambda}}_t$ such that $\lambda^c_{t+N}=1$, is optimal for \eqref{eq:OP_LMPC} at time $t$. We show that then $\bar{\mathbf{x}}^\infty_{t+1:t+N+1} =[x^\infty_{t+1},\dots, x^\infty_{t+N+1}]$, $\bar{\mathbf{u}}^\infty_{t+1:t+N} =[u^\infty_{t+1},\dots, u^\infty_{t+N}]$ and multipliers $\bar{\boldsymbol{\lambda}}_{t+1}$ (defined analogously to $\bar{\boldsymbol{\lambda}}_t$) are optimal at time $t+1$. As before, we can show that 
    \begin{align}
        J^{\mathrm{MO-LMPC},j}_{t+1,N}(x^j_{t+1}) \geq  \sum_{i=1}^M \alpha_i\left( \sum_{k=t+1}^{t+N} h_i(x_k^{c}, u_k^{c}) +\hat{V}_i^{j}(x_{t+N+1}^{c}, \bar{\boldsymbol{\lambda}}_{t+1})\right)
    \end{align}
to obtain a lower bound on $J^{\mathrm{MO-LMPC},j}_{t+1,N}(x^j_{t+1})$. It remains to show that $\bar{\mathbf{x}}^\infty_{t+1:t+N+1}$, $\bar{\mathbf{u}}^\infty_{t+1:t+N}$, $\bar{\boldsymbol{\lambda}}_{t+1}$ is feasible and satisfies \eqref{newconstraint}. We verify that
\begin{subequations}
    \begin{align}
    J^j_i(\bar{\mathbf{x}}^\infty_{t+1:t+N+1}, \bar{\mathbf{u}}^\infty_{t+1:t+N},&\bar{\boldsymbol{\lambda}}_{t+1})+h_i(x^\infty_t, u^\infty_t)=\\
    &=\sum_{k=t}^{t+N}h_i(x^\infty_k, u^\infty_k) + \hat{V}^j_i(x^\infty_{t+N+1}, \bar{\boldsymbol{\lambda}}_{t+1} )\\
    &=\sum_{k=t}^{t+N}h_i(x^\infty_k, u^\infty_k)+\sum_{k=t+N+1}^{\infty}h_i(x^\infty_k, u^\infty_k)\\
    &=\sum_{k=t}^{t+N-1}h_i(x^\infty_k, u^\infty_k)+\sum_{k=t+N}^{\infty}h_i(x^\infty_k, u^\infty_k)\\
    &=\sum_{k=t}^{t+N-1}h_i(x^\infty_k, u^\infty_k)+\hat{V}^j_i(x^\infty_{t+N}, \bar{\boldsymbol{\lambda}}_{t} )\\
    & = J_i(\bar{\mathbf{x}}^\infty_{t:t+N}, \bar{\mathbf{u}}^\infty_{t:t+N-1}, \bar{\boldsymbol{\lambda}}_{t}).
\end{align}
\end{subequations}
Thus, $\bar{\mathbf{x}}^\infty_{t+1:t+N+1}, \bar{\mathbf{u}}^\infty_{t+1:t+N}, \bar{\boldsymbol{\lambda}}_{t+1}$ is optimal for \eqref{eq:OP_LMPC} at time $t+1$ if $\bar{\mathbf{x}}^\infty_{t:t+N}$, $ \bar{\mathbf{u}}^\infty_{t:t+N-1}$, $ \bar{\boldsymbol{\lambda}}_{t}$ is optimal for \eqref{eq:OP_LMPC} at time $t$. Since we have already shown optimality of $\bar{\mathbf{x}}^\infty_{0:N}$, $\bar{\mathbf{u}}^\infty_{0:N}$, $\bar{\boldsymbol{\lambda}}_0$, we have that $\bar{\mathbf{x}}^\infty_{t:t+N}, \bar{\mathbf{u}}^\infty_{t:t+N}, \bar{\boldsymbol{\lambda}}_{t}$ is optimal for \eqref{eq:OP_LMPC} $\forall t\geq 0$, $\forall j \geq c+1$. Notice that $\bar{\mathbf{x}}^\infty_{0:N}$, $\bar{\mathbf{u}}^\infty_{0:N}$ is optimal for \eqref{eq:OP_LMPC} for any $N$, and $$\lim_{N\rightarrow\infty}(\bar{\mathbf{x}}^\infty_{0:N}\bar{\mathbf{u}}^\infty_{0:N}) = (\mathbf{x}^\infty, \mathbf{u}^\infty),$$
implying that the closed-loop trajectory can be equivalently obtained as the solution of the optimization problem \eqref{eq:OP_LMPC} as $N\rightarrow\infty$. For a given $N$, let us define the following sets of state-input trajectories and multipliers:
\begin{align*}
    \mathcal{F}^j_{1,N}&=\{\mathbf{x}^j_{0:N}, \mathbf{u}^j_{0:N-1}, \boldsymbol{\lambda}_0^j ~|~ \eqref{eq:stateconstraint}, \eqref{eq:initconstraint} \text{ hold}\}\\
    \mathcal{F}^j_{2,N}&=\{\mathbf{x}^j_{0:N}, \mathbf{u}^j_{0:N-1}, \boldsymbol{\lambda}_0^j~ |~ \eqref{eq:termconstraint}\text{ holds}\}\\
    \mathcal{F}^j_{3,N}&=\{\mathbf{x}^j_{0:N}, \mathbf{u}^j_{0:N-1}, \boldsymbol{\lambda}_0^j~ |~\eqref{eq:convhullconstraint},\eqref{newconstraint} \text{ hold}\}
\end{align*} 
Then using the optimality of $J^{\mathrm{MO-LMPC},j}_{0,N}(x^j_0)$ we have 
\begingroup
\allowdisplaybreaks
\begin{subequations}
    \begin{align}
    \lim_{N\rightarrow\infty}&J^{\mathrm{MO-LMPC},j}_{t,N}(x^j_0) \nonumber \\
    &\leq \min_{\substack{\tilde{\mathbf{x}}^j, \tilde{\mathbf{u}}^j, \tilde{\boldsymbol{\lambda}}^{j}_0\in\mathcal{F}^j_{m,\infty},\\
    \forall m=1,2,3}} \lim_{N\rightarrow\infty} \sum_{i=1}^M \alpha_i J^j_i(\tilde{\mathbf{x}}^{j}_{0:N}, \tilde{\mathbf{u}}^{j}_{0:N-1}, \tilde{\boldsymbol{\lambda}}^{j}_0)\\
    &=\min_{\substack{\tilde{\mathbf{x}}^j, \tilde{\mathbf{u}}^j, \tilde{\boldsymbol{\lambda}}^{j}_0\in\mathcal{F}^j_{m,\infty},\\
    \forall m=1,2,3}} \sum_{i=1}^M \alpha_i(\sum_{k=0}^{\infty} h_i(x_k^{j}, u_k^{j}) + \lim_{N\rightarrow\infty} \hat{V}_i^j(x_N^j, \tilde{\boldsymbol{\lambda}}^{j}_0))\label{eq:c2g_ub}\\
    &=\min_{\substack{\tilde{\mathbf{x}}^j, \tilde{\mathbf{u}}^j, \tilde{\boldsymbol{\lambda}}^{j}_0\in\mathcal{F}^j_{m,\infty},\\
    \forall m=1,3}} \sum_{i=1}^M \alpha_i\sum_{k=0}^{\infty} h_i(x_k^{j}, u_k^{j})\label{eq:strng_cvx}\\
    &=\min_{\substack{\tilde{\mathbf{x}}^j, \tilde{\mathbf{u}}^j, \tilde{\boldsymbol{\lambda}}^{j}_0\in\mathcal{F}^j_{1,\infty}}} \sum_{i=1}^M \bar{\alpha}_i\sum_{k=0}^{\infty} h_i(x_k^{j}, u_k^{j})\label{eq:lem}\\
    &= \min_{\substack{\boldsymbol{\pi}}}~~ \sum_{i=1}^M\bar{\alpha}_iV_i(x_0, \boldsymbol{\pi})\label{eq:equiv_MOopt} \\
    & ~~~~~\text{s.t}~~~ x_{t+1}=Ax_t+B\pi_t(x_t)\nonumber \\
    &~~~~~~~~~~~x_t\in\mathcal{X},~\pi_t(x_t) \in \mathcal{U} ~~\forall t\geq 0 \nonumber\\
    &~~~~~~~~~~~x_0=x_S \nonumber
\end{align}
\end{subequations}
\endgroup
The equality \eqref{eq:strng_cvx} is obtained by using the strong convexity of $h_{i'}(\cdot)$ and $\alpha_{i'}>0$ from Assumption \eqref{assmp:strng_cvx}--any optimal state-input trajectory of \eqref{eq:c2g_ub} must converge to a unique $x_F\in\mathcal{CS}^j$, and so $\lim_{N\rightarrow\infty}\hat{V}^j_i(x^j_N,\bar{\boldsymbol{\lambda}}^j_0) = \hat{V}^j_i(x_F, \bar{\boldsymbol{\lambda}}^j_\infty)=0$, and the constraint $\mathcal{F}_{2,\infty}$ is redundant. The equality constraint \eqref{eq:lem} is obtained by invoking Lemma \ref{lem:equiv_mo_prob}. The final equality constraint is obtained from the definition of $V_i(x_0, \boldsymbol{\pi})$. Notice that $\mathbf{x}^\infty, \mathbf{u}^\infty$ is feasible for \eqref{eq:equiv_MOopt} and also achieves the lower bound $\lim_{N\rightarrow\infty}J^{\mathrm{LMPC}}_{0,N}(x^j_0)$. Thus the policy $\boldsymbol{\pi}^j$ inducing the closed-loop trajectory $\mathbf{x}^\infty, \mathbf{u}^\infty$ $\forall j\geq c+1$ is the optimal solution to \eqref{eq:equiv_MOopt}, and is a Pareto optimal solution of \eqref{eq:inft_mo_opt} because of the convexity of \eqref{eq:equiv_MOopt} \cite{miettinen1999nonlinear}.  \hfill $\blacksquare$
\end{proof}

\section{Numerical Example}\label{sec:example}
We apply the MO-LMPC framework for the constrained optimal control of an LTI system.
Consider an agent described by the following discrete-time point mass model for motion on the 2D plane,
\begin{align}\label{eq:examplemodel}
\underbrace{\begin{bmatrix}X_{t+1}\\Y_{t+1}\\\dot{X}_{t+1}\\\dot{Y}_{t+1}\end{bmatrix}}_{x_{t+1}} = \begin{bmatrix}1&0&\Delta t & 0\\ 0 & 1 & 0 & \Delta t\\
0 & 0 & 1 & 0\\ 0 & 0 & 0 & 1\end{bmatrix}\underbrace{\begin{bmatrix}X_t\\Y_t\\\dot{X}_t\\\dot{Y}_t\end{bmatrix}}_{x_t} + \begin{bmatrix}\frac{1}{2}\Delta t^2& 0 \\
0 & \frac{1}{2}\Delta t^2\\
\Delta t& 0\\0&\Delta t\end{bmatrix}\underbrace{\begin{bmatrix}
    \ddot{X}_t\\\ddot{Y}_t
\end{bmatrix}}_{u_t},
\end{align}
where at time step $t$ the state $x_t$ comprises the agent's position $(X_t, Y_t)$ and velocity $(\dot{X}_t, \dot{Y}_t)$. The input $u_t$ is the acceleration in each direction $(\ddot{X}_t, \ddot{Y}_t)$, and the discretization time step is $\Delta t = 0.25\mathrm{s}$. The agent is subject to constraints 
\begin{align}\label{eq:exampleconstraints}
    \mathcal{X} = \{x~ | ~\mathrm{x}_{\mathrm{lb}} \leq x\leq  \mathrm{x}_\mathrm{ub}\},~ \mathcal{U} = \{u~ | ~\mathrm{u}_\mathrm{lb}\leq u \leq \mathrm{u}_\mathrm{ub}\},
\end{align}
where $\mathrm{x}_\mathrm{ub} = -\mathrm{x}_\mathrm{lb} = (4\mathrm{m}, 4\mathrm{m}, 4\mathrm{ms^{-1}}, 4\mathrm{ms^{-1}})$ and $\mathrm{u}_\mathrm{ub} = -\mathrm{u}_\mathrm{lb} = (1\mathrm{ms^{-2}}, 1\mathrm{ms^{-2}})$.  

The iterative control task is to steer the agent from the initial state $x_S = (4\mathrm{m}, 3\mathrm{m}, 0\mathrm{ms^{-1}}, 0\mathrm{ms^{-1}})$ to standstill at the origin $x_F = (0, 0, 0, 0)$. Control performance is evaluated with respect to two distinct objectives:
\begin{align}
V_1(x_S, \boldsymbol{\pi}) &= \sum_{t\geq 0 } x_t^\top Q x_t + \pi(x_t)^\top R \pi(x_t),~~\quad \quad[\text{Time to Target}]\label{eq:stabcost_eg}\\
V_2(x_S, \boldsymbol{\pi}) &= \sum_{t\geq 0 } \eta\Delta t \sqrt{\dot{X}_t^2 + \dot{Y}_t^2},~~~~~\quad \quad [\text{Energy Consumption}]\label{eq:soccost_eg}
\end{align}
where  $Q = \mathrm{diag}((0.5, 0.5, 0.02, 0.02)), R=\mathrm{diag}((0.6, 0.6))$ and $\eta = 12.0\mathrm{m^{-1}}$. The first objective \eqref{eq:stabcost_eg} is a proxy to approximately penalize the time taken to reach the task target at the origin ($x = 0,~ u = 0$) using a quadratic cost for strong convexity required by Assumption~\ref{assmp:strng_cvx}. (Note an additional objective $V_3(x_S,\boldsymbol{\pi}) = \sum_{t\geq 0} \Delta t$ can be introduced to penalize time exactly, but is omitted in this study for ease of visualization.) The second objective \eqref{eq:soccost_eg} penalizes the agent's energy consumption during the task. The agent is assumed to have a battery that is initially fully charged, with 100\% state of charge (SOC) at $t=0$. While energy consumption is generally a function of both speed and acceleration \cite{joa2024ecodriving}, here we assume a simpler model where the battery drains proportionally with the agent's speed:
\begin{align}\label{eq:examplesocstate}
    SOC_{t+1} = SOC_t - \eta \Delta t \sqrt{\dot{X}^2_t+\dot{Y}_t^2}.
\end{align}
Note $SOC_t$ is not an explicit state in our model \eqref{eq:examplemodel}.

The initialization trajectory for $\mathcal{CS}^0$ \eqref{eq:CS_def} is created using an MPC policy which models the system states as in \eqref{eq:examplemodel}  but with the additional SOC state \eqref{eq:examplesocstate} bounded via the nonlinear constraints $0 \leq SOC_k \leq 100,~\forall k=t,\dots, t+N$, resulting in a Nonlinear MPC formulation.
The objective is given by \eqref{eq:stabcost_eg}, with horizon $N=20$, and the problem is solved using IPOPT \cite{biegler2009large}.

After initializing $\mathcal{CS}^0$, we solve the multi-objective control task using the MO-LMPC formulation presented in Section~\ref{sec:MOLMPC}. The MO-LMPC is formulated as in \eqref{eq:OP_LMPC}, with the objective functions \eqref{eq:stabcost_eg} and \eqref{eq:soccost_eg}, 
 $\alpha = (0.4, 0.6)$, and MPC horizon $N=5$. The convex optimization problem is solved using ECOS \cite{domahidi2013ecos} \footnote{Our code is available at \url{https://github.com/shn66/Multi-Objective-LMPC}}.

The closed-loop state trajectories and objectives are depicted for $26$ iterations in Figs. \ref{fig:closedloop_traj} and \ref{fig:closedloop_performance}. As shown in Thm.~\ref{thm:feas}, all closed-loop trajectories converge to $x_F$.
Figure~\ref{fig:closedloop_traj} compares the initialization trajectory with the final trajectory ($j=26$): during iteration $j=26$ the agent state converges to the origin in fewer time steps (minimizing \eqref{eq:stabcost_eg}) while consuming less energy (minimizing \eqref{eq:soccost_eg}). Figure~\ref{fig:closedloop_performance} shows that both control costs improve with each task iteration, as in Thm.~\ref{thm:cost}. 

The choice of $\alpha$ suggests a prioritization between control objectives. As seen in Fig.~\ref{fig:closedloop_performance}, increasing $\alpha_1$ from $0.01$ to $0.99$ results in trajectories that increasingly prioritize minimizing \eqref{eq:stabcost_eg}. 
(However, we note that the strict performance improvement seen in Fig.~\ref{fig:closedloop_performance} cannot be guaranteed via choice of time-invariant $\alpha$ alone; the constraint \eqref{newconstraint} is required.)
As shown in Thm.~\ref{thm:pareti}, by varying $\alpha$ the MO-LMPC also converges to different Pareto optimal solutions. This is supported by results in Fig.~\ref{fig:closedloop_performance}, where the Pareto frontier was obtained by varying $\alpha_1$ in the range $0.01 \leq \alpha_1 \leq 0.99$ and running MO-LMPC to convergence; all converged solutions are non-dominating.
Note that due to our requirement of iterative performance improvement in each control objective, parts of the Pareto frontier may not be reachable using MO-LMPC. 
While MO-LMPC always converges to a point on the Pareto frontier, if certain Pareto optimal policies induce a larger cost in any control objective than the MO-LMPC initialization trajectory, those policies are not feasible for the MO-LMPC formulation. 
Future work will consider the design of an outer-loop algorithm for relaxing the performance improvement condition to incorporate designer preferences in the converged Pareto optimal solution.


\begin{figure}[ht]
    \centering 
\includegraphics[width=1.1\columnwidth]{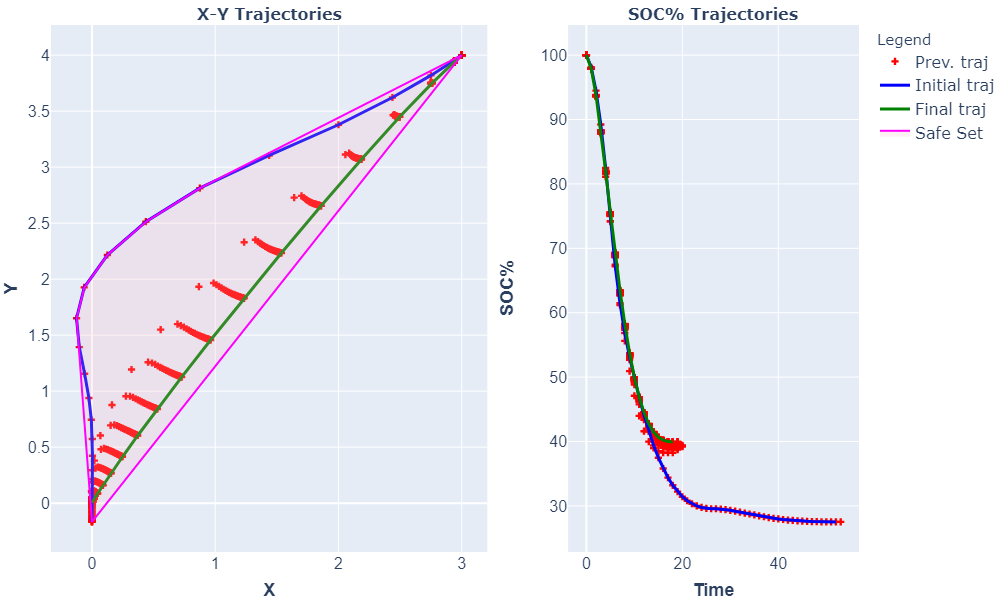}
    \caption{$X-Y$ and SOC\% trajectories of the system. Notice that the agent is stabilized sooner while consuming less SOC\% in iteration $j=26$, compared to iteration $j=0$.}
    \label{fig:closedloop_traj}
    
\end{figure}
\begin{figure}[ht]
    \centering   
    \includegraphics[width=1.1\columnwidth]{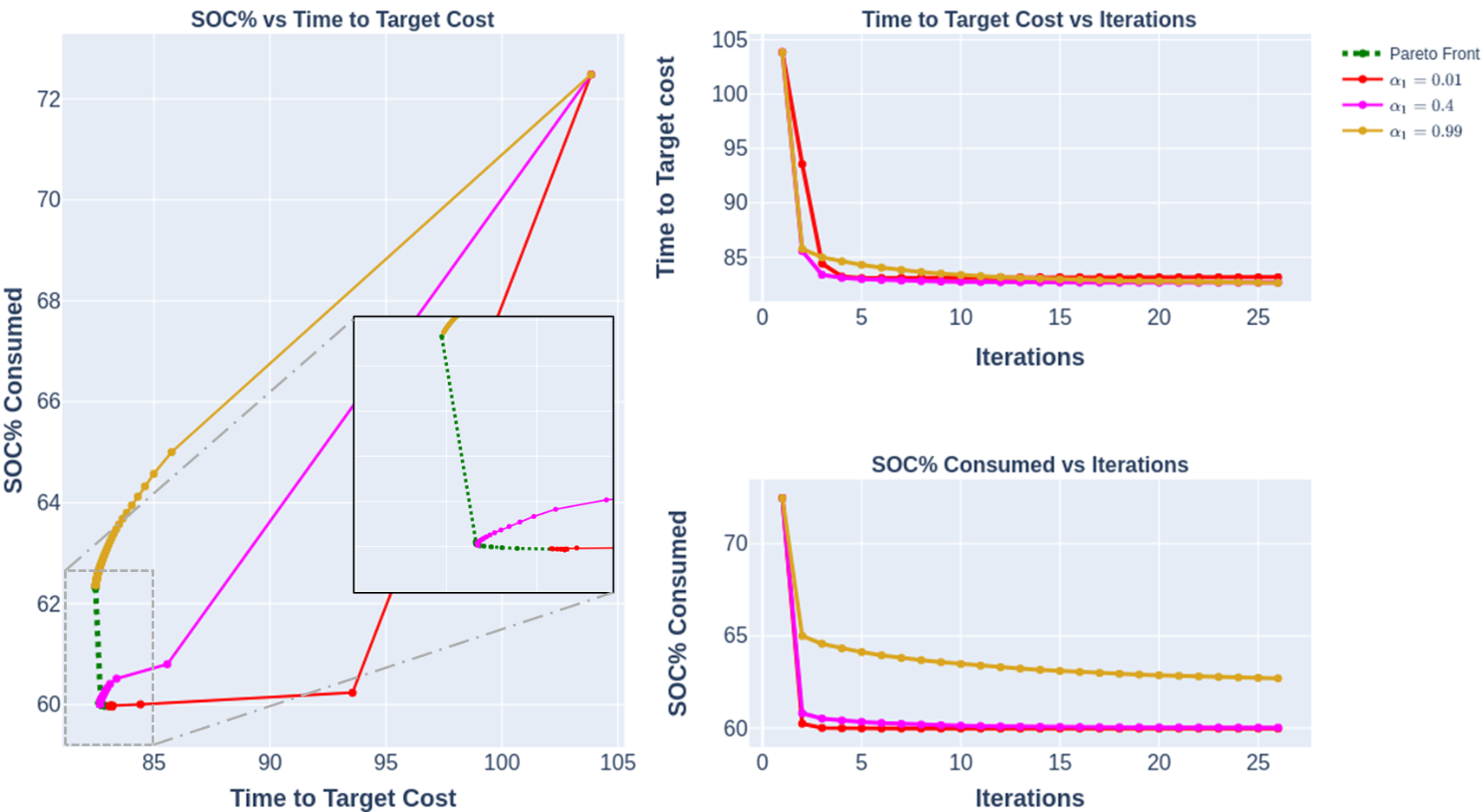}
    \caption{Trajectory costs \eqref{eq:stabcost_eg} and SOC\% consumed \eqref{eq:soccost_eg} across iterations, depicting iterative improvement in both objectives, and the effect of $\alpha$ on converged solutions.}
 \label{fig:closedloop_performance}
      
\end{figure}

\section{Conclusion}
This paper introduced the Multi-Objective Learning Model Predictive Controller (MO-LMPC), a control approach for iterative performance improvement with respect to multiple control costs.
We proved that the MO-LMPC controller is recursively feasible, asymptotically stable, iteratively reduces multiple control costs, and converges to a Pareto optimal solution to an infinite-horizon multi-objective optimal control problem.
The method was demonstrated via a numerical example for capacity-constrained control of a linear system. 
\bibliographystyle{ieeetr}
\bibliography{bib} 
%
%








\end{document}